\newtheorem{theorem}{Theorem}
\newtheorem{proposition}{Proposition}
\newtheorem{assumption}{Assumption}
\newtheorem{definition}{Definition}
\newtheorem{example}[theorem]{Example}
\newtheorem{lemma}{Lemma}
\newenvironment{remark}[1][Remark]{\noindent\textbf{#1.} }{\ \rule{0.5em}{0.5em}}
\pgfplotsset{compat=1.18}
\begin{document}

\begin{titlepage}
\vspace{-0.8in}
\title{\Huge Difference-in-Differences Under Network Interference}

\author{Zhiguo Xiao  \and Kuan Sun  \thanks{School of Management, Fudan University, Shanghai, 200433, China. (Address correspondence to: zhiguo\_xiao@fudan.edu.cn). Zhiguo Xiao acknowledges financial support from the National Natural Science Foundation of China (Grant Number: 72232002). }}
\date{\today}

\maketitle
\vspace{-0.3in}
\begin{center}
\textbf{Abstract}
\end{center}
\indent \hspace{0.65cm}
This paper develops doubly robust estimators for direct (DATT) and spillover (SATT) average treatment effects on the treated in network-based difference-in-differences (DID) designs. Unlike standard DID methods, the proposed approach explicitly accounts for treatment spillovers and high-dimensional network confounding from complex unit dependencies in networks. It introduces a novel identification condition where conditional parallel trends hold only after adjusting for high-dimensional network confounders. The estimators are shown to be consistent and asymptotically normal as network size increases, leveraging graph neural networks (GNNs) to handle nuisance functions. Simulation studies and an empirical application on U.S. county-level mask mandates’ impact on COVID-19 transmission confirm their finite-sample performance, addressing limitations of conventional DID that ignore network interference.

\bigskip
\textbf{Keywords:} Difference-in-differences; Network spillover; Doubly robust; Graph neural networks.

\end{titlepage}

\section{Introduction}
Difference-in-differences (DID) methods are widely used for policy evaluation with observational data. In its canonical form with covariates, DID relies on the conditional parallel trends assumption (CPTA), which posits that, in the absence of treatment, treated and comparison groups with identical covariates information would have followed similar trends in potential outcomes over time (see Roth et al. 2023\cite{roth2023s}). This assumption is typically justified under the stable unit treatment value assumption (SUTVA), which rules out interference between units. However, in many real world applications, such as in social, economic, or epidemiological contexts, units are interconnected through networks. In such settings, the CPTA becomes ambiguous, as a unit’s potential outcome may depend not only on its own treatment status but also on the treatment status of its neighbors. Moreover, to restore the credibility of the CPTA under network interference, it is necessary to condition on high-dimensional network confounders. This network-mediated interference generates two distinct sources of bias: (1) spillover bias, which stems from causal effects propagated through the network, and (2) confounding bias, which arises due to the endogenous structure of the network itself.

Two-way fixed-effects (TWFE) regressions are the most common implementation of DID methods in panel data settings. The panel data literature on peer effects in networks remains sparse. The existing studies, such as Bramoullé (2020)\cite{bramoulle2020peer}, typically extend TWFE regressions with only a very low-dimensional set of controls: an individual’s own covariates, the number of immediate neighbors, and the neighbors’ average characteristics. This strategy is rather restricted for two reasons. First, the parsimonious control set implicitly presumes that only first-order connections matter, thereby overlooking confounding that may arise from higher-order network links. Second, reducing neighbors’ characteristics to simple averages cannot capture the complex, potentially nonlinear channels through which these attributes affect outcomes.

To accomodate the network effects, in this paper we decompose the average treatment effects on the treated (ATT) into two components: the direct average treatment effects on the treated (DATT) and the  spillover average treatment effects on the treated (SATT). We develop nonparametric estimation and inference procedures for both DATT and SATT under a new set of network-based conditional parallel trends assumption. To eliminate spillover bias, we adapt the exposure mapping framework to delineate the subsets of units whose untreated outcomes are expected to follow parallel paths.  To remove confounding bias, we impose the parallel trends assumption conditional on the entire covariate matrix \(\boldsymbol{X}\) and the full adjacency matrix \(\boldsymbol{A}\), thereby avoiding ad hoc restrictions to low-order neighborhoods.  We further demonstrate, both analytically and in simulations, that conventional DID estimators which ignore either treatment spillovers or network confounder can suffer substantial bias and lead to invalid inference.

Our primary contribution is to provide a theoretical foundation for difference-in-differences estimators that accommodate both treatment and confounder interference in observational networks. We extend the approximate-neighborhood-interference (ANI) framework of Leung (2022\cite{leung2022causal}, 2024\cite{leung2024graph}) to panel and repeated cross-sectional data with staggered treatment adoption.  Moreover, we enrich the emerging double/debiased-machine-learning DID literature by replacing parametric first-step models with graph-neural-network (GNN) learners that exploit the full adjacency matrix. We also demonstrate that the doubly robust DATT and SATT estimator exhibits asymptotic normality as the network size increases. Notably, this network data structure yields a distinct variance-covariance matrix. For variance estimation, we employ the network heteroskedasticity and autocorrelation consistent (HAC) estimator developed by Kojevnikov et al. (2021)\cite{kojevnikov2021limit}.

Most DID methodology continues to impose SUTVA, thus excluding spillovers, for example, the augmented IPW estimator of Sant’Anna and Zhao (2020)\cite{sant2020doubly} and the multi-period heterogeneity frameworks of de Chaisemartin and D’Haultfoeuille (2020)\cite{de2020two}, Sun and Abraham (2021)\cite{sun2021estimating}, and Callaway and Sant’Anna (2021)\cite{callaway2021difference}.  A small but growing strand relaxes SUTVA by introducing limited interference.  Butts (2021)\cite{butts2021did2s} and Fiorini (2024)\cite{fiorini2024simple} modify two-way fixed-effects (TWFE) specifications to allow local spatial spillovers; Hettinger et al. (2023)\cite{hettinger2024doubly} and Lee et al. (2023)\cite{lee2025policy} use specific exposure mappings to motivate outcome regression (OR), inverse probability weighting (IPW), and doubly robust (DR) estimators; Shahn et al. (2022) derive structural-nested mean models under clustered/network interference;  and Xu (2023)\cite{xu2023difference} adopts a design-based approach with ANI, focusing solely on outcome interference and ignoring neighbor covariate effects.  We contribute to this literature by developing a DID framework that simultaneously accommodates network interference arising from both treatment assignment and confounding variables, yielding a more comprehensive and flexible structure for causal inference with panel and repeated cross-sectional data under interference.

The remainder of the paper is organized as follows. Section 2 introduces the modeling framework, provides motivation, and defines the causal estimands of interest. Section 3 presents the main identification assumptions and examines the bias of the naive DID estimator in the presence of treatment and confounder interference, motivating the construction of a doubly robust estimand. Section 4 outlines the estimation procedure, including the use of graph neural networks (GNNs) for first-step nuisance function estimation. Section 5 establishes the large sample properties of the proposed estimators, including consistency and asymptotic normality under the ANI framework, and introduces a HAC variance estimator adapted to network dependence. Section 6 reports results from a comprehensive simulation study and Section 7 applies the method to evaluate the impact of U.S. county-level mask-mandate policy on COVID-19 transmission. Section 8 concludes.

\section{Problem Setup}
Let the population of units be \( \mathcal{N}_n = \{1,\dots,n\} \). We represent the undirected network by an \(n \times n\) binary adjacency matrix \(\boldsymbol{A}\).  A link between units \(i\) and \(j\) is indicated by \(A_{ij}=A_{ji}=1\), while self-ties are excluded by setting \(A_{ii}=0\).  The graph distance \(\ell_{\mathbf{A}}(i,j)\) is the length of the shortest path connecting nodes \(i\) and \(j\) (taken as \(\infty\) if no path exists).  For each node \(i\), its \(K\)-neighborhood is \(\mathcal{N}(i,K)=\{\,j:\ell_{\mathbf{A}}(i,j)\le K\,\}\) whose size is \(n(i,K)=|\mathcal{N}(i,K)|\). We call the nodes in \(\mathcal{N}(i,1)\setminus\{i\}\) the neighbors of \(i\) and those in \(\mathcal{N}(i,K)\setminus\{i\}\) with \(K>1\) its higher-order neighbors; the degree of node \(i\) is \(n(i,1)\), the number of its direct neighbors.

Units are indexed by \( i \in  \mathcal{N}_n \) and time periods are indexed by \( t = \{1, \dots, T \}\). $Y_{it}$ denotes the observed outcome. $D_{it}$ denotes the treatment, with its realized value $d_{it}\in \{0,1\}$. $X_{i}$ is a vector of pre-treatment covariates — such as age, geographic location, or socioeconomic status — which may influence both treatment assignment and potential outcomes. The potential outcome is  $Y_{i t}(\boldsymbol{d}_t)$, where $\boldsymbol{d}_t=\left(d_{i t}, \boldsymbol{d}_{-i, t}\right)$, with \( \boldsymbol{d}_{-i,t} \) being the treatment assignments of all other units at time \( t \). Thus the vector \( \boldsymbol{d}_t = (d_{it}, \boldsymbol{d}_{-i,t}) \) represents the full treatment assignment at time \( t \). We assume that the potential outcome is determined by 
\begin{equation}\label{1}
   Y_{i t}(\boldsymbol{d_t}) = h_{it}\left( d_{i t}, \boldsymbol{d}_{-i, t}, \boldsymbol{X}, \boldsymbol{A}, \boldsymbol{\varepsilon}_t\right),
\end{equation}
where $h_{it}$ is an unknown function, $\boldsymbol{X}=(X_{1},\dots, X_{n})'$, $\boldsymbol{\varepsilon}_{t}=(\varepsilon_{1t}, \dots, \varepsilon_{nt})'$, with $\varepsilon_{it}$'s being unobservable random errors related to the variation of the potential outcomes. We also assume the following treatment assignment mechanism:
\begin{equation}\label{2}
    D_{it} = l_{it}\bigl( \boldsymbol{X}, \boldsymbol{A}, \boldsymbol{\nu}_t\bigr),
\end{equation}
where $l_{it}$ is an unknown function, $\boldsymbol{\nu}_{t}=(\nu_{1t}, \dots, \nu_{nt})'$, with $\nu_{it}$'s being unobservable random errors related to the variation of the treatment assignment.

This setup captures potential spillovers and local interactions: an individual’s outcome may depend not only on their own treatment but also on the treatments and characteristics of neighbors in the network. In a standard DID setup, researchers often treat the treatment assignment as given or quasi-exogenous. However, when treatment is suspected to be endogenous or correlated with underlying characteristics, it can be useful to explicitly model the treatment assignment function like (\ref{2}). In this extended DID settings, incorporating a propensity score model offers two main advantages. First, when dealing with high-dimensional covariates or complex network structures, balancing treatment and control groups based solely on the outcome model becomes challenging. A propensity score model allows researchers to flexibly model the treatment assignment mechanism, using machine learning tools such as random forests, neural networks, or graph neural networks, thereby improving the accuracy of causal effect estimation. Second, within a doubly robust DID framework, the inclusion of a propensity score model provides robustness: consistent estimation and valid inference can still be achieved even if either the outcome model or the treatment model is misspecified. 

For ease of exposition, we focus on the two-period scenario, i.e., $t=1,2$, in the following analysis. The results for multi-period settings are discussed in the Appendix.

\subsection{Motivation}
Under SUTVA, the parallel trends assumption serves as the core identification condition in standard DID analysis. It states that, in the absence of treatment, the average outcome paths of the treatment and control groups would have followed the same trend over time. Formally, for untreated potential outcomes \( Y_{i2}(0) \), this implies:
\begin{equation}\label{4}
  \mathbb{E}[Y_{i2}(0) - Y_{i1}(0) \mid D_i=1] 
= 
\mathbb{E}[Y_{i2}(0) - Y_{i1}(0) \mid D_i=0].  
\end{equation}

A stronger and more flexible version is the conditional parallel trends assumption, which allows for systematic differences in observed covariates \( X_i \). It posits that, conditional on \( X_i \), the potential outcome paths of the treated and control units would have remained parallel in the absence of treatment. That is, for all relevant values of \( x \),
\begin{equation}\label{5}
   \mathbb{E}\left[Y_{i2}(0) - Y_{i1}(0) \mid D_i = 1, X_i = x\right] = 
\mathbb{E}\left[Y_{i2}(0) - Y_{i1}(0) \mid D_i = 0, X_i = x\right]. 
\end{equation}

To relax the SUTVA assumption and allow for network interference, the existing literature introduces the concept of effective treatment or exposure mapping, where each unit’s outcomes are depend not only on their own treatment status but also on the treatment received by others in their network. As formalized by Manski (2013)\cite{manski2013identification} and Aronow and Samii (2017)\cite{aronow2017estimating}, this approach defines a low-dimensional exposure vector:
\begin{equation}
   T_{i} = (D_{i},G_i)=\left(D_{i},\ g(i,\boldsymbol{D}_{-i},\boldsymbol{A})\right), 
\end{equation}
where \( g(\cdot) \) summarizes the expose to peer treatment based on the network structure \( \boldsymbol{A} \). The individual treatment \( D_i \) is separated from the exposure term to distinguish the direct treatment effect and spillover effects in the potential outcomes framework. In parallel, covariate exposure is captured through a low-dimensional control vector:

\begin{equation}
W_{i} = q(i,\boldsymbol{X},\boldsymbol{A}),
\end{equation}
which aggregates relevant covariate information from \( i \)’s neighborhood. A commonly used example of such mappings is

\begin{equation}\label{8}
 T_i=\left(D_i, \sum_{j=1}^n A_{ij} D_j\right), \quad 
W_i=\left(X_i,  \frac{\sum_{j=1}^n A_{ij} X_j}{\sum_{j=1}^n A_{ij}}\right),   
\end{equation}
where the second element of \( T_i \) captures the total number of treated neighbors, and the second element of \( W_i \) represents the average covariate value among them. Motivated by the use of low-dimensional exposure mappings in conventional cross-sectional studies to address network interference, we can immediately extend the parallel trends assumption to settings with network interference. Specifically, we assume that, conditional on network-adjusted covariates \( W_i \), the evolution of untreated potential outcomes is comparable across units with and without exposure to treatment. Formally, the assumption is stated as:
\begin{equation}\label{9}
  \mathbb{E}\left[Y_{i2}(0, {0}) - Y_{i1}(0, {0}) \mid D_i = 1,G_i = g, W_i\right]
=
\mathbb{E}\left[Y_{i2}(0, {0}) - Y_{i1}(0, {0}) \mid D_i = 0, G_i= 0, W_i\right].  
\end{equation}

As specified in (\ref{8}), the treatment vector \(\boldsymbol{D}\) reduces to two sufficient statistics: an indicator  $D_i$ for the unit’s own treatment and the count of its treated neighbors $G_i$—the former pinpoints the direct effect, while the latter captures spillovers. Likewise, \(W_i\) is summarized by the unit’s covariates and those of its immediate neighbors. Consistent with most exposure mappings literature, this construction depends only on \(\boldsymbol{D}_{\mathcal{N}(i, 1)}\) and on \(\boldsymbol{X}_{\mathcal{N}(i, 1)}\), thereby ruling out interference beyond the first-order neighborhood. Essentially, the assumption (\ref{9}) states that, conditional on a unit’s own covariates and those of its immediate neighbors, the untreated potential outcome trend of treated and control units with no treated neighbors would have evolved in parallel. However, the assumption that the summary statistics \(T_i\) and \(W_i\) can be correctly specified is difficult to justify (Sävje 2024\cite{savje2024causal}). In contrast, our model (\ref{1}) and (\ref{2}) is considerably less restrictive — we do not require the correct specification of a low-dimensional function \(T_i\) of (\(\boldsymbol{D}, \boldsymbol{A}\)) to capture treatment interference, nor do we require the correct specification of a low-dimensional function \(W_i\) based on(\(\boldsymbol{D}, \boldsymbol{A}\)) to summarize confounder interference.

\subsection{Causal estimands of interest}
We consider conditional ATT estimands that are indexed by exposure mappings following the DID literature. Let $
T_i=(D_i,G_i)=(D_i, g(i,\boldsymbol{D_{-i}},\boldsymbol{A}))$, where the function \( g(\cdot) \) takes values in a finite set \( \mathcal{G} \) of possible exposure levels. For each sample size \(n\), let \(\mathcal{M}_n\subseteq\mathcal{N}_n\) denote a selected subset of units and its size is denoted by $m_n$, i.e., \(m_n = |\mathcal{M}_n|\). 

In terms of the individual treatment, we first establish its causal estimand given a specific level of the neighborhood treatment. The definition of direct average treatment effect on the treated (DATT) is:

\begin{equation}\label{10}
  \tau^{DATT}(g)
=
\frac{1}{m_n} \sum_{i \in \mathcal{M}_n}\mathbb{E}
\left[
Y_{i2}(1,g)
-
Y_{i2}(0,g)
\mid
D_i=1, G_i=g, \boldsymbol{X}, \boldsymbol{A}
\right],
\end{equation}
for $g \in \mathcal{G}$. This denotes the direct average treatment effect on the treated when the neighborhood treatment is set to level g while adjusting for high-dimensional network confounders. We restrict the comparison to a subpopulation $\mathcal{M}_n$ in order to ensure overlap assumption, as further discussed below.

Next, define the overall DATT, denoted by $\tau^{DATT}$, as the average treatment effect on the treated aggregated over the distribution of the neighborhood treatment among treated units, which is
\begin{equation}\label{11}
     \tau^{DATT}
 =
 \sum_{g \in \mathcal{G}}
 \tau^{DATT}(g)
P\left(G_i=g \mid D_i=1, \boldsymbol{X}, \boldsymbol{A}\right).
\end{equation}

We now define the spillover effects for treated units, i.e., the SATT. Specifically, we consider the SATT of having the neighborhood treatment set to level $g$ versus 0, when the individual treatment is $d$, is defined as
\begin{equation}\label{12}
 \tau^{SATT}(g ; d)
 =
 \frac{1}{m_n} \sum_{i \in \mathcal{M}_n}\mathbb{E}
 \left[
 Y_{i2}(d, g)
 -
 Y_{i2}(d, 0)
 \mid
 D_i=1, G_i=g, \boldsymbol{X}, \boldsymbol{A}
\right].
\end{equation}

The overall SATT effect when the individual treatment equals $d$ is then given by
\begin{equation}\label{13}
     \tau^{SATT}(d)
 =
 \sum_{g \in \mathcal{G}}
 \tau^{SATT}(g ; d)
 P(G_i=g \mid D_i=1, \boldsymbol{X}, \boldsymbol{A}).
\end{equation}

The direct effects $\tau^{DATT}(g)$ in (\ref{10}) and spillover effects $\tau^{SATT}(g;d)$ in (\ref{12}) compare potential outcomes for treated units under fixed values of individual and neighborhood treatment. In contrast, the overall DATT in (\ref{11}) and SATT in (\ref{13}) average these treatment effects over the distribution of the neighborhood treatment among treated units. Unlike previous studies that consider averages over hypothetical interventions (e.g., Bernoulli assignments or general stochastic interventions), our ATT estimands fix the treatment status of the treated unit and average over the observed neighborhood treatment distribution. This allows us to identify the total ATT for units who are treated and are also exposed to other units' treatment:
\begin{equation}\label{15}
  \tau^{ATT}=
 \frac{1}{m_n} \sum_{i \in \mathcal{M}_n}\sum_{g \in \mathcal{G}}\mathbb{E}
\left[
Y_{i2}(1,g)
-
Y_{i2}(0,0)
\mid
D_i=1, G_i=g, \boldsymbol{X}, \boldsymbol{A}
\right]P(G_i=g \mid D_i=1, \boldsymbol{X}, \boldsymbol{A}),
\end{equation}
Then, it is straightforward to show that this is equal to the sum of the overall DATT and SATT eﬀects: 
\begin{align}
\tau^{ATT}
& =
\frac{1}{m_n}
\sum_{i \in \mathcal{M}_n}\sum_{g \in \mathcal{G}}
\mathbb{E}
\left[
Y_{i2}(1,g)
-
Y_{i2}(0,g)
\mid
D_i=1, G_i=g, \boldsymbol{X}, \boldsymbol{A}
\right]
P(G_i=g \mid D_i=1, \boldsymbol{X}, \boldsymbol{A}) \notag\\
& +
\frac{1}{m_n}
\sum_{i \in \mathcal{M}_n}\sum_{g \in \mathcal{G}}
\mathbb{E}
\left[
Y_{i2}(0,g)
-
Y_{i2}(0,0)
\mid
D_i=1, G_i=g, \boldsymbol{X}, \boldsymbol{A}
\right]
P(G_i=g \mid D_i=1, \boldsymbol{X}, \boldsymbol{A}) \notag\\
& = \tau^{DATT} + \tau^{SATT}(0).
\end{align}

This formula shows that the overall ATT for treated units under interference consists of two parts: the direct treatment effect (DATT) capturing how their own treatment changes outcomes, and the spillover effect (SATT) reflecting how exposure to treated neighbors affects them. 

In the main body of this paper, we develop a general framework for identifying the DATT (i.e., $\tau^{DATT}(g)$). The identification of the SATT follows a parallel logic and is discussed in the Appendix.

\section{Identiﬁcation of DATT}
First, we outline a set of commonly used assumptions for identifying our key causal estimand DATT.
\begin{assumption}[\textbf{Locality of Exposure Mapping}]\label{assumption1}
There exists a fixed neighborhood size \( K \) such that a unit's exposure mapping depends only on the treatment assignments and network structure within its \( K \)-neighborhood. Specifically, for any treatment vectors \( \boldsymbol{d}, \boldsymbol{d}' \) and network structures \( \boldsymbol{A}, \boldsymbol{A}' \), we have:
\begin{equation}
    G(i, \boldsymbol{d_{-i}}, \boldsymbol{A}) = G(i, \boldsymbol{d}'_{-i}, \boldsymbol{A}')
    \quad \text{if} \quad
    \begin{cases}
        \mathcal{N}_{\boldsymbol{A}}(i, K) = \mathcal{N}_{\boldsymbol{A}'}(i, K), \\
        \boldsymbol{A}^{\mathcal{N}_{\boldsymbol{A}}(i, K)} = \boldsymbol{A}'^{\mathcal{N}{\boldsymbol{A}'}(i, K)}, \\
        \boldsymbol{d}_{-i}^{\mathcal{N}_{\boldsymbol{A}}(i, K)} = \boldsymbol{d}_{-i}'^{\mathcal{N}_{\boldsymbol{A}'}(i, K)}.
    \end{cases}
\end{equation}
\end{assumption}

This assumption ensures that exposure mapping is determined by the local network structure and treatment assignments within the \( K \)-neighborhood. This restriction is modest and consistent with the assumptions underlying most exposure mappings in prior literature.
\begin{example} The following exposure mapping satisfies Assumption \ref{assumption1}:

\[
G_i = \mathbf{1}\left\{ \sum_{j=1}^n A_{ij} D_j > 0 \right\},
\]
where \( G_i \) indicates whether unit \( i \) has at least one treated neighbor, based on the adjacency matrix \( \boldsymbol{A} \) and the treatment vector \( \boldsymbol{D} \).

\end{example}

This mapping allows us to define DATT effect, \(\tau^{DATT}(1)\), comparing treated and untreated units with treated neighbors, and  SATT effect, \( \tau^{SATT}(1;1) \) and \( \tau^{SATT}(1;0) \) measure how having at least one treated neighbor affects outcomes for treated and untreated individuals, respectively, holding own treatment status fixed.
\begin{example}
A more general exposure mapping that satisfies Assumption \ref{assumption1} is:

\[
G_i = \left(\ \sum_{j=1}^n A_{ij} D_j \right).
\]
\end{example}
This form represents one of the most commonly used exposure mappings, leveraging local treatment aggregation to facilitate the analysis of peer effects in networked settings. Additional examples of exposure mappings under network interference can be found in the literature, including Aronow and Samii (2017)\cite{aronow2017estimating}, Sävje et al. (2021)\cite{savje2021average}, and Eckles et al. (2017)\cite{eckles2017design}.
\begin{assumption}[\textbf{No Anticipation}]\label{assumption2}
Treatment occurs only in period 2, and all units remain untreated and unaffected by any spillover effects prior to this point. 
\begin{equation}
  Y_{i1}\left(\boldsymbol{d}_{i,2}, \boldsymbol{d}_{-i,2}\right) = Y_{i1}(0, \underline{0}).  
\end{equation}
\end{assumption}
This assumption implies that the potential outcomes in the pre-treatment period is the same as it would be in the absence of both treatment and spillovers. It extends the standard no-anticipation assumption by additionally ruling out any spillover effects in the pre-treatment period, under the premise that no units are treated at that time. 

We now introduce the core assumption for identifying the DATT: 
\begin{assumption}[\textbf{Network Conditional Parallel Trends}]\label{assumption3}
For each unit  \(i \in \mathcal{M}_n\),
\begin{align}
& \mathbb{E}\left(Y_{i2}\left(0, g\right) \mid D_i= 1, G_i=g, \boldsymbol{X}, \boldsymbol{A}\right)-\mathbb{E}\left(Y_{i1}(0, {0}) \mid D_i= 1, G_i=g, \boldsymbol{X}, \boldsymbol{A}\right) \notag\\
= & \mathbb{E}\left(Y_{i2}\left(0, g\right) \mid D_i= 0, G_i=g, \boldsymbol{X}, \boldsymbol{A}\right)-\mathbb{E}\left(Y_{i1}(0, {0}) \mid D_i= 0, G_i=g, \boldsymbol{X}, \boldsymbol{A}\right).
\end{align}
\end{assumption}

Although the Network Conditional Parallel Trends (NCPT) assumption shares conceptual roots with the standard conditional parallel trends assumption (\ref{5}), our framework introduces two critical innovations. First, beyond conditioning on individual covariates \(x_i\), we incorporate the full covariate matrix \(\boldsymbol{X}\) and network structure \(\boldsymbol{A}\). This generalization enables the use of network-derived covariate functions—such as centrality measures or positional characteristics—rather than relying solely on individual-level attributes. Second, while traditional parallel trends assumptions compare potential outcome trends across treatment groups absent treatment, our NCPT assumption explicitly addresses interference. By controlling for spillover exposure through the exposure mapping, we isolate the direct effect under the assumption that potential outcomes evolve similarly across exposure groups when spillover effects are accounted for.

In essence, our assumption simultaneously accommodates both treatment interference and confounding interference. To highlight the practical implications of this distinction, we subsequently demonstrate how the naive difference-in-differences estimator becomes biased—whether targeting the conventional average treatment effect on the treated (ATT) or our proposed direct average treatment effect on the treated (DATT)—when interference in treatment assignment and confounder structure is neglected.

\subsection{The bias of the naive DID estimator under treatment interference}
In this part, we examine the case where only treatment interference is present, excluding the influence of confounding interference. In the following subsection, we extend the analysis to incorporate confounding interference. Under SUTVA, the potential outcomes for ATT depend only on the individual’s own treatment status, denoted as \(Y_{i2}(d_i)\), and are unaffected by the treatment assignments of other units. The standard ATT under the SUTVA assumption is given by:  
\begin{equation}
    \tau^{SUTVA} = \frac{1}{m_n} \sum_{i \in \mathcal{M}_n} \mathbb{E}\left[Y_{i2}(1) - Y_{i2}(0) \mid D_i = 1,x_i\right]
\end{equation}
In the naive DID framework, several covariate-adjusted estimators for the ATT have been proposed, such as the outcome regression estimator, the inverse probability weighting estimator, and the doubly robust DID estimator. All these estimators consistently estimates the following quantity:

\begin{equation}
 \tau^{obs} = \frac{1}{m_n} \sum_{i \in \mathcal{M}_n} \left\{ \mathbb{E}[Y_{i2} - Y_{i1} \mid D_i = 1, x_i] - \mathbb{E}[Y_{i2} - Y_{i1} \mid D_i = 0, x_i] \right\}.  
\end{equation}

If the conditional parallel trends assumption (\ref{5}) holds, then \(\tau^{\text{obs}}\) and \({\tau^{SUTVA}}\) are identical. However, if SUTVA is violated, we cannot obtain a clean \(\tau^{\text{obs}}\) because the second-period outcome \( Y_{i2} \) would be influenced by other individuals' treatment statuses and, thus, these estimators would clearly not estimate the quantity \(\tau^{SUTVA}\). Moreover, they also fail to consistently estimate the direct average treatment
effect on the treated \(\tau^{DATT}(g)\) or \(\tau^{DATT}\), since they compare changes over time between treated and control units based solely on their own treatment status \(D_i\), while disregarding potential variation in exposure due to the neighborhood exposure.

We next present a proposition that characterizes the discrepancy between \(\tau^{obs}\) and \(\tau^{DATT}\), and identifies two primary sources of bias contributing to the difference.

\begin{proposition}\label{prop1}
Suppose Assumption \ref{assumption1}, \ref{assumption2} and \ref{assumption3} holds for any \( g \in \mathcal{G}, \forall i \). An unbiased estimator targeting  $\tau^{obs}$ does not imply unbiasedness for
$\tau^{DATT}$, the resulting bias equals
\begin{align}
\tau^{\text{obs}} - \tau^{\text{DATT}} 
= & \frac{1}{m_n} \sum_{i \in \mathcal{M}_n} \sum_{g \in \mathcal{G}} \Big[ 
\mathbb{E}\big(Y_{i2} - Y_{i1} \mid D_i = 0, G_i = g, x_i\big) - 
\mathbb{E}\big(Y_{i2} - Y_{i1} \mid D_i = 0, G_i = g', x_i\big) \Big] \notag\\
&\quad \cdot \Big[ \mathbb{P}(G_i = g \mid D_i = 0, x_i) - \mathbb{P}(G_i = g \mid D_i = 1, x_i) \Big].
\end{align}
\end{proposition}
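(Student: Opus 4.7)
The plan is to rewrite $\tau^{\mathrm{obs}}$ and $\tau^{\mathrm{DATT}}$ in a common functional form and then difference them. First I would apply the tower property to split each sub-expectation appearing in $\tau^{\mathrm{obs}}$ over the finite-valued exposure $G_i$. Introducing the shorthand
\[
\mu_{d,g}\ \equiv\ \mathbb{E}[Y_{i2}-Y_{i1}\mid D_i=d,\,G_i=g,\,x_i],\qquad p_{d,g}\ \equiv\ \mathbb{P}(G_i=g\mid D_i=d,\,x_i),
\]
this decomposition gives
\[
\tau^{\mathrm{obs}}\ =\ \frac{1}{m_n}\sum_{i\in\mathcal{M}_n}\Bigl(\sum_{g\in\mathcal{G}}\mu_{1,g}\,p_{1,g}\ -\ \sum_{g\in\mathcal{G}}\mu_{0,g}\,p_{0,g}\Bigr).
\]
Assumption~\ref{assumption1} is what makes $G_i$ a well-defined function of a finite $K$-neighborhood and guarantees $|\mathcal{G}|<\infty$, so the partition over $g$ is legitimate.

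The heart of the argument is identifying the unit-level direct effect as the DID-type quantity $\tau^{\mathrm{DATT}}_i(g)=\mu_{1,g}-\mu_{0,g}$. For this I would first use Assumption~\ref{assumption2} to replace the observed $Y_{i1}$ by $Y_{i1}(0,\underline{0})$ under every conditioning event, and then use Assumption~\ref{assumption3} (NCPT) to swap the conditioning event $D_i=0$ for $D_i=1$ inside $\mathbb{E}[Y_{i2}(0,g)-Y_{i1}(0,\underline{0})\mid\cdot]$ at a fixed $G_i=g$. Substituting back into the definition of the overall DATT (equation~(\ref{11})) gives
\[
\tau^{\mathrm{DATT}}\ =\ \frac{1}{m_n}\sum_{i\in\mathcal{M}_n}\sum_{g\in\mathcal{G}}(\mu_{1,g}-\mu_{0,g})\,p_{1,g}.
\]

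Subtracting the two representations causes the $\mu_{1,g}p_{1,g}$ terms to cancel, leaving $\sum_g \mu_{0,g}(p_{1,g}-p_{0,g})$ in each unit's summand. The final step is cosmetic: since $\sum_{g\in\mathcal{G}}(p_{0,g}-p_{1,g})=0$, I may subtract an arbitrary reference value $\mu_{0,g'}$ from each $\mu_{0,g}$ without changing the sum. This rewrites the bias as the product of (i) a \emph{differential-trend} factor $\mu_{0,g}-\mu_{0,g'}$ measuring how the counterfactual trend varies across exposure levels among controls, and (ii) a \emph{distributional-mismatch} factor $p_{0,g}-p_{1,g}$ capturing how the exposure distribution differs between treated and untreated units, matching the statement of the proposition up to the sign convention.

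The only step requiring genuine argument is the DID identification $\tau^{\mathrm{DATT}}_i(g)=\mu_{1,g}-\mu_{0,g}$: one must keep the conditioning events in NCPT aligned with the potential-outcome quantities being averaged, and note that the proposition conditions on $x_i$ alone rather than on the full $(\boldsymbol{X},\boldsymbol{A})$, so one is implicitly marginalizing out the remaining network information consistent with the ``treatment-interference-only'' scope of this subsection. Everything else—the tower-property decomposition, the cancellation of the $\mu_{1,g}p_{1,g}$ terms, and the zero-sum trick that inserts $\mu_{0,g'}$—is purely algebraic and requires no further assumptions.
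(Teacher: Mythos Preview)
Your proposal is correct and follows essentially the same route as the paper: tower-property decomposition of $\tau^{\mathrm{obs}}$ over $G_i$, identification of $\tau^{\mathrm{DATT}}$ as $\sum_g(\mu_{1,g}-\mu_{0,g})\,p_{1,g}$ via no anticipation and NCPT, cancellation of the $\mu_{1,g}p_{1,g}$ terms, and the zero-sum trick to insert the reference level $\mu_{0,g'}$. Your remark about the sign convention is also apt---the paper's own derivation arrives at the factor $p_{1,g}-p_{0,g}$ rather than the $p_{0,g}-p_{1,g}$ printed in the proposition statement.
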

Proposition \ref{prop1} characterizes the bias that arises when interference is mistakenly ignored. This result parallels the discussion in Forastiere et al. (2021)\cite{forastiere2021identification}, which also considers interference over networks but under an unconfoundedness framework. However, in our setting, the bias of the ATT-type estimator  vanishes under weaker conditions than those required in prior work. Specifically, it is sufficient for the neighborhood treatment \(G_i\) to have no effect on outcome changes among control units only, or for the individual and neighborhood treatments \((D_i, G_i)\) to be conditionally independent given covariates \(X_i\).

There are several main sources of dependence between \(D_i\) and \(G_i\), including: (i) unobserved neighborhood-level confounders not captured by \(X_i\), and (ii) peer influence in treatment uptake (see Forastiere et al., 2021)\cite{forastiere2021identification}. We now examine the bias of the naive DID estimator in the presence of confounder interference.

\subsection{The bias of the naive DID under confounder interference}
We are concerned with bias that arises when the parallel trends assumption fails to hold conditional on individual covariates \(X_i\), but becomes valid when conditioning additionally on a vector of neighborhood-level covariates \(U_i \in \mathcal{U}\). A typical example of \(U_i\) is the network-weighted average of neighbors' covariates, such as $U_i = \frac{\sum_{j=1}^n A_{ij} X_j}{\sum_{j=1}^n A_{ij}}$.

In what follows, we assume a network parallel trends assumption holds conditional on the enriched covariate set \((X_i, U_i)\), where \(U_i\) captures aggregated information from unit \(i\)'s neighbors:
\begin{equation}\label{22}
\mathbb{E}[Y_{i2}(0,g) - Y_{i1}(0,g) \mid D_i = 1, X_i, U_i] = \mathbb{E}[Y_{i2}(0,g) - Y_{i1}(0,g) \mid D_i = 0, X_i, U_i], \quad \forall g \in \mathcal{G}.    
\end{equation}

We present a proposition that characterizes the discrepancy between \(\tau^{\mathrm{obs}}\) and \(\tau^{\mathrm{DATT}}\) under confounder interference.
\begin{proposition}\label{prop2}
Suppose Assumption(\ref{assumption1}),(\ref{assumption2}) and (\ref{22}) holds for any \( g \in \mathcal{G}, \forall i \). An unbiased estimator targeting  $\tau^{obs}$ does not imply unbiasedness for
$\tau^{DATT}$, the resulting bias equals
\begin{align}
\tau^{\text{obs}} - \tau^{\text{DATT}} = 
& \frac{1}{m_n} \sum_{i \in \mathcal{M}_n} \sum_{g \in \mathcal{G}} \sum_{u \in \mathcal{U}}\Big[ 
\mathbb{E}\big(Y_{i2} - Y_{i1} \mid D_i = 0, G_i = g,U_i = u, x_i\big)\notag \\
&- 
\mathbb{E}\big(Y_{i2} - Y_{i1} \mid D_i = 0, G_i = g',U_i = u', x_i\big) \Big]\notag \\
&\quad \cdot \Big[ \mathbb{P}(U_i = u \mid D_i =1,G_i = g, x_i)\cdot \mathbb{P}(G_i = g \mid D_i =1, x_i)\notag \\
&\quad\quad- \mathbb{P}(U_i = u \mid D_i =0,G_i = g, x_i)\cdot \mathbb{P}(G_i = g \mid D_i =0, x_i) \Big].
\end{align}
If we further assume that \(D_i\) and \(G_i\) are conditionally independent given \(X_i\), then the bias simplifies to:
\begin{align}
\tau^{\text{obs}} - \tau^{\text{DATT}} = 
& \frac{1}{m_n} \sum_{i \in \mathcal{M}_n} \sum_{u \in \mathcal{U}} \Big[
\mathbb{E}\big(Y_{i2} - Y_{i1} \mid D_i = 0, U_i = u, x_i\big) \notag\\
&\quad - \mathbb{E}\big(Y_{i2} - Y_{i1} \mid D_i = 0, U_i = u', x_i\big) \notag\Big] \\
&\quad \cdot \Big[ \mathbb{P}(U_i = u \mid D_i =1, x_i) - \mathbb{P}(U_i = u \mid D_i =0, x_i) \Big].
\end{align}
\end{proposition}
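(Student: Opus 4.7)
The plan is to mirror the proof of Proposition~\ref{prop1} but carry an additional iterated expectation over the neighborhood-covariate summary $U_i$. At a high level I would (i) use the chain rule $P(G_i=g, U_i=u \mid D_i, x_i) = P(U_i=u \mid D_i, G_i=g, x_i)\,P(G_i=g \mid D_i, x_i)$ to disaggregate the conditional trends appearing in both $\tau^{\text{obs}}$ and $\tau^{\text{DATT}}$, (ii) apply the network conditional parallel trends assumption~(\ref{22}) together with Assumption~\ref{assumption2} and consistency to rewrite the treated-group counterfactual trend as the observed control-group trend in the same $(G_i=g, U_i=u)$ cell, and (iii) subtract the two resulting expressions, so that the treated-group observed trends cancel and what remains is the control trend re-weighted by the mismatch between the treated and control joint propensities of $(G_i, U_i)$ given $x_i$.

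Carrying this out, iterated expectations inside $\tau^{\text{obs}}$ give $\mathbb{E}[Y_{i2}-Y_{i1}\mid D_i=d, x_i]=\sum_{g,u}\mathbb{E}[Y_{i2}-Y_{i1}\mid D_i=d,G_i=g,U_i=u,x_i]\,P(U_i=u\mid D_i=d,G_i=g,x_i)\,P(G_i=g\mid D_i=d,x_i)$ for $d\in\{0,1\}$. For $\tau^{\text{DATT}}$, consistency turns $\mathbb{E}[Y_{i2}(1,g)\mid D_i=1,G_i=g,\cdot]$ into $\mathbb{E}[Y_{i2}\mid D_i=1,G_i=g,\cdot]$, Assumption~\ref{assumption2} identifies $Y_{i1}$ with $Y_{i1}(0,0)$, and~(\ref{22}) substitutes the observed control trend for the treated counterfactual at the same $(G_i,U_i)$ cell. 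Averaging the result first over $U_i$ via $P(U_i=u\mid D_i=1,G_i=g,x_i)$ and then over $G_i$ via $P(G_i=g\mid D_i=1,x_i)$ produces a representation of $\tau^{\text{DATT}}$ whose treated-group component is identical to that of $\tau^{\text{obs}}$ but whose control-group component is re-weighted by the treated joint propensity. Taking the difference cancels the treated terms and leaves the control trend summed against the weight $P(U_i=u,G_i=g\mid D_i=1,x_i)-P(U_i=u,G_i=g\mid D_i=0,x_i)$, which factors into the bracketed product in the statement. The arbitrary reference values $(g',u')$ may be introduced freely because $\sum_{g,u}[P(U_i,G_i\mid D_i=1,x_i)-P(U_i,G_i\mid D_i=0,x_i)]=0$, so any baseline telescopes out of the double sum. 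The simplified formula under $D_i\perp G_i\mid X_i$ then follows by setting $P(G_i=g\mid D_i=1,x_i)=P(G_i=g\mid D_i=0,x_i)$, which reduces the weight to $P(U_i=u\mid D_i=1,x_i)-P(U_i=u\mid D_i=0,x_i)$ after summing over $g$.

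The main obstacle I expect is bridging assumption~(\ref{22}), which conditions only on $(D_i,X_i,U_i)$, to the joint conditioning on $(D_i,G_i,X_i,U_i)$ that is required once $G_i$ has been introduced into the decomposition. Either~(\ref{22}) should be read as the specialization of Assumption~\ref{assumption3} to the sufficient summaries $(X_i,U_i)$, or a mean-independence statement $\mathbb{E}[Y_{i2}(0,g)-Y_{i1}\mid D_i,G_i,X_i,U_i]=\mathbb{E}[Y_{i2}(0,g)-Y_{i1}\mid D_i,X_i,U_i]$ must be invoked to validate the counterfactual substitution at each $(G_i,U_i)$ cell. A secondary subtlety arises in the corollary: collapsing $G_i$ out of the control-group trend additionally requires $\mathbb{E}[Y_{i2}-Y_{i1}\mid D_i=0,G_i=g,U_i,x_i]$ to be functionally independent of $g$ (equivalently, $U_i$ summarizes all spillover-relevant variation among controls), a mild but non-trivial condition that should be flagged explicitly when applying the simplified bias expression.
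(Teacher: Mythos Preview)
Your proposal is correct and follows essentially the same approach as the paper's proof: iterated expectations over $(G_i,U_i)$ for both $\tau^{\text{obs}}$ and $\tau^{\text{DATT}}$, application of the parallel trends assumption together with consistency and no anticipation to replace the treated counterfactual trend by the observed control trend, cancellation of the treated-group terms upon subtraction, and insertion of the baseline $(g',u')$ via the zero-sum property of the weight difference. The two subtleties you flag---whether (\ref{22}) as written extends to conditioning on $G_i$, and whether $G_i$ can legitimately be marginalized out of the control trend in the corollary---are real, but the paper's own proof glosses over them in exactly the same way, so your reading is careful rather than deficient.
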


Proposition \ref{prop2} implies that if we mistakenly assume no interference and condition only on an individual’s own covariates, the resulting bias is a combination of two sources: treatment interference bias and confounder interference bias. In contrast, even if we assume that the individual treatment \(D_i\) is independent of the neighborhood exposure \(G_i\) given a subset of covariates \(X_i\)—thus effectively ruling out treatment interference—bias may still arise due to unmeasured neighborhood-level confounders.

To address this, conditioning on a simplified summary measure of first-order neighbors' covariates—as described in (\ref{22}), such as their average covariate values—can help alleviate this confounder interference bias. Nevertheless, this method still fails to capture important structural heterogeneity in the network relationships.

Our network conditional parallel trends assumption as (\ref{assumption3}) requires conditioning on the entire adjacency matrix \(\boldsymbol{A}\) and the full covariate matrix \(\boldsymbol{X}\). This implies a much stricter version of parallel trends, as we assume that only units with isomorphic network positions and identical covariates exhibit parallel counterfactual trajectories. As shown in Figure \ref{figue2}, units 3 and 4 share identical individual-level confounders as well as the same first-order neighborhood confounder information. This configuration satisfies an analogue of the parallel trends assumption as (\ref{22}). However, our method does not require the parallel trends assumption to hold specifically between units 3 and 4. In fact, in this example, units 3 and 4 are not isomorphic (they would have been if unit 2 and unit 3 were not connected). Instead, our method requires the parallel trends assumption to hold among units with greater similarity in confounding information, enabling more accurate estimation of causal effects.
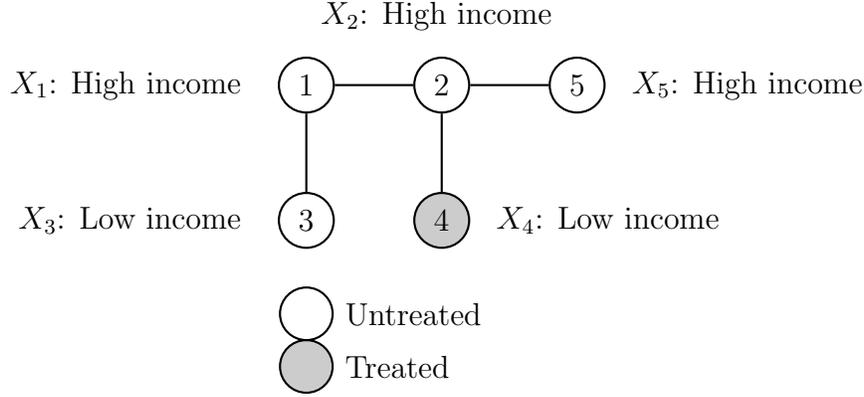
\begin{figure}[h!]
\centering
\begin{tikzpicture}[node distance=1.8cm, auto, thick]

\tikzstyle{treated}=[circle,draw,fill=black!20,minimum size=7mm]
\tikzstyle{control}=[circle,draw,fill=white!20,minimum size=7mm]

\node[control] (1) {1};
\node[control] (2) [right of=1] {2};
\node[control] (3) [below of=1] {3};
\node[treated] (4) [below of=2] {4};
\node[control] (5) [right of=2] {5};

\draw[-] (1)--(2);
\draw[-] (1)--(3);
\draw[-] (2)--(4);
\draw[-] (2)--(5);

\node[above=0.2cm of 2, align=left] { $X_2$: High income };
\node[left=0.2cm of 1, align=right] { $X_1$: High income };
\node[left=0.2cm of 3, align=right] { $X_3$: Low income };
\node[right=0.2cm of 4, align=left] { $X_4$: Low income };
\node[right=0.2cm of 5, align=right] { $X_5$: High income };
\node[treated, below=1.2cm of 3, label=right:Treated] {};
\node[control, below=0.5cm of 3, label=right:Untreated] {};

\end{tikzpicture}

\caption{\textbf{Conditional Parallel Trends on X and A}}
 \begin{minipage}{\textwidth}
    \small \textit{Note:} Figure \ref{figue2} illustrates the logic underlying the network conditional parallel trends assumption, highlighting the necessity of conditioning on both individual covariates \(\boldsymbol{X}\) and the network adjacency structure \(\boldsymbol{A}\). 
 Initially, if we condition only on traditional individual covariates \(X_i\), units 3 and 4 would appear to satisfy parallel trends because they share similar characteristics (e.g., both having low income).

 Extending the conditioning set to include simple first-order neighborhood information—such as the average covariates of neighbors, defined by \(W_i = \left(X_i, \frac{\sum_{j=1}^n A_{ij} X_j}{\sum_{j=1}^n A_{ij}}\right)\)—may still suggest that units 3 and 4 are comparable, since their aggregated neighbor profiles appear similar.
However, when we fully condition on the entire adjacency matrix \(\boldsymbol{A}\) along with the covariates \(\boldsymbol{X}\), it becomes clear that units 3 and 4 do not satisfy the network conditional parallel trends assumption. This discrepancy arises because deeper network features—such as global connectivity patterns, centrality, or indirect pathways—are now captured. Therefore, failing to fully account for the detailed network structure encoded in \(\boldsymbol{A}\) can result in biased comparisons.

  \end{minipage}

  \label{figue2}
\end{figure}

\subsection{The doubly robust estimand}
Under our network parallel trends assumption, the causal estimand of interest for both DATT and SATT can be transformed into an identifiable estimand using one of three commonly used strategies in the literature: outcome regression, inverse probability weighting, and doubly robust methods. These approaches are widely discussed in works such as Abadie (2005)\cite{abadie2005semiparametric}, Wooldridge (2009)\cite{woolridge2009introductory}, and Sant’Anna and Zhao (2020)\cite{sant2020doubly}. Among them, doubly robust methods are particularly appealing due to their robustness to model misspecification. Moreover, doubly robust frameworks are naturally compatible with modern machine learning techniques, allowing researchers to flexibly model high-dimensional network covariates while still maintaining valid inference.

To formally construct the doubly robust estimand, we first define the outcome regression function as

\begin{equation}\label{26}
  \mu_{t, dg}\left(i, \boldsymbol{X}, \boldsymbol{A}\right)=\mathbb{E}\left(Y_{i t} \mid D_i=d, G_i=g,  \boldsymbol{X}, \boldsymbol{A}\right). \quad  
\end{equation}
We also follow Imbens (2000)\cite{imbens2000role} to define the generalized propensity score regression as 
\begin{equation}\label{27}
   p_{dg}(i, \boldsymbol{X}, \boldsymbol{A}) = \mathbb{P}(D_i = d,\, G_i = g \mid \boldsymbol{X}, \boldsymbol{A}).
\end{equation}
Here we focus on the panel data case with $t=1,2$, leaving the derivations for the multiple-period panel and the repeated cross-section cases to the Appendix. Let \(\Delta Y_i = Y_{i2} - Y_{i1}\) and \(\Delta\mu_{dg}\left(i, \boldsymbol{X}, \boldsymbol{A}\right)= \mu_{2,dg}\left(i, \boldsymbol{X}, \boldsymbol{A}\right) - \mu_{1,dg}\left(i, \boldsymbol{X}, \boldsymbol{A}\right)\). Define
\begin{equation}\label{28}
   \tau^{dr}(g) = \frac{1}{m_n} \sum_{i \in \mathcal{M}_n} \tau^{dr}_i(g),
\end{equation}
where
\begin{equation}
  \tau^{dr}_i(g) = \left( D_i \mathbf{1}\{G_i = g\} - \frac{(1 - D_i)\mathbf{1}\{G_i = g\} \cdot p_{1g}(i, \boldsymbol{X}, \boldsymbol{A})}{1 - p_{1g}(i, \boldsymbol{X}, \boldsymbol{A})} \right) \left( \Delta Y_i - \Delta \mu_{0g}(i, \boldsymbol{X}, \boldsymbol{A}) \right).
\end{equation}

\begin{remark}
The estimand \( \tau^{dr}(g) \) defined above bears a close resemblance to the doubly robust score for panel data DID models developed by Sant’Anna and Zhao (2020)\cite{sant2020doubly}. Their results demonstrate that estimators based on this doubly robust structure achieve semiparametric efficiency under standard regularity conditions. Our work generalizes this framework by (i) incorporating spillover effects and (ii) accounting for network confounding – features absent in their original formulation.

Beyond efficiency considerations in conventional panel settings (the focus of Sant’Anna and Zhao), we investigate whether valid inference persists when combining the doubly robust score with double/debiased machine learning (DML) techniques. This approach parallels that of Chang (2020)\cite{chang2020double}, who examines DML-based inference in standard DID frameworks. Our contribution adapts this methodology to settings with dependence structures induced by network interference, while establishing a central limit theorem that ensures asymptotically valid inference even with flexibly estimated, high-dimensional nuisance parameters.
\end{remark}

\begin{proposition}\label{prop3}
Suppose Assumptions (\ref{assumption1})-(\ref{assumption3}) hold. If either the conditional outcome mean model or the propensity score model is correctly specified, then $\tau^{DATT}(g)=\tau^{dr}(g)$.
\end{proposition}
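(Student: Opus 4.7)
The plan is to compute $\mathbb{E}[\tau^{dr}_i(g)\mid\boldsymbol{X},\boldsymbol{A}]$ for each $i$ and show it reproduces the $i$-th summand of $\tau^{DATT}(g)$ (up to the natural propensity weight that is absorbed by the normalization in (\ref{10}) under the overlap implicit in the choice of $\mathcal{M}_n$), under either correct model. A useful preliminary algebraic rewriting is
\begin{equation*}
D_i\mathbf{1}\{G_i=g\} - \frac{(1-D_i)\mathbf{1}\{G_i=g\}\,p_{1g}(i,\boldsymbol{X},\boldsymbol{A})}{1-p_{1g}(i,\boldsymbol{X},\boldsymbol{A})} = \mathbf{1}\{G_i=g\}\,\frac{D_i - p_{1g}(i,\boldsymbol{X},\boldsymbol{A})}{1 - p_{1g}(i,\boldsymbol{X},\boldsymbol{A})},
\end{equation*}
which places the weight in the canonical IPW form familiar from Sant'Anna and Zhao. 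On the event $\{D_i=d,G_i=g\}$, consistency together with Assumption \ref{assumption1} pins down $Y_{i2}=Y_{i2}(d,g)$ (the exposure map is $K$-local so $Y_{i2}(\boldsymbol{d}_2)$ depends on $\boldsymbol{d}_2$ only through $(d,g)$ given $(\boldsymbol{X},\boldsymbol{A})$), while Assumption \ref{assumption2} gives $Y_{i1}=Y_{i1}(0,\underline{0})$; thus $\Delta Y_i$ is an observable proxy for $Y_{i2}(d,g)-Y_{i1}(0,\underline{0})$ on that event.

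Case A (outcome regression correct). Apply the tower property with inner conditioning on $(D_i,G_i,\boldsymbol{X},\boldsymbol{A})$. By definition (\ref{26}), $\mathbb{E}[\Delta Y_i - \Delta\mu_{0g}(i,\boldsymbol{X},\boldsymbol{A})\mid D_i=0,G_i=g,\boldsymbol{X},\boldsymbol{A}]=0$, so the $(1-D_i)\mathbf{1}\{G_i=g\}$ branch contributes zero irrespective of the propensity factor. The remaining $D_i\mathbf{1}\{G_i=g\}$ branch gives
\begin{equation*}
\mathbb{E}[\tau^{dr}_i(g)\mid\boldsymbol{X},\boldsymbol{A}] = p_{1g}(i,\boldsymbol{X},\boldsymbol{A})\bigl\{\mathbb{E}[\Delta Y_i\mid D_i=1,G_i=g,\boldsymbol{X},\boldsymbol{A}] - \Delta\mu_{0g}(i,\boldsymbol{X},\boldsymbol{A})\bigr\}.
\end{equation*}
NCPT (Assumption \ref{assumption3}) identifies $\Delta\mu_{0g}(i,\boldsymbol{X},\boldsymbol{A})$ with the counterfactual time change $\mathbb{E}[Y_{i2}(0,g)-Y_{i1}(0,\underline{0})\mid D_i=1,G_i=g,\boldsymbol{X},\boldsymbol{A}]$, and subtracting this from $\mathbb{E}[Y_{i2}(1,g)-Y_{i1}(0,\underline{0})\mid D_i=1,G_i=g,\boldsymbol{X},\boldsymbol{A}]$ yields exactly $\tau^{DATT}_i(g)$.

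Case B (propensity score correct). Let $\widetilde{\mu}_{0g}$ be an arbitrary $(\boldsymbol{X},\boldsymbol{A})$-measurable outcome nuisance, and split $\Delta Y_i - \widetilde{\mu}_{0g} = (\Delta Y_i - \Delta\mu_{0g}) + (\Delta\mu_{0g}-\widetilde{\mu}_{0g})$. The first piece carries the causal signal: its weighted conditional expectation is handled exactly as in Case A and delivers the DATT contribution. The second piece, $\Delta\mu_{0g}-\widetilde{\mu}_{0g}$, is $(\boldsymbol{X},\boldsymbol{A})$-measurable, so pulling it out of the conditional expectation leaves the weighted indicator $\mathbb{E}[\mathbf{1}\{G_i=g\}(D_i-p_{1g})/(1-p_{1g})\mid\boldsymbol{X},\boldsymbol{A}]$, which is zero when $p_{1g}$ is correctly specified because the treated and reweighted control subpopulations with $G_i=g$ are balanced in expectation. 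Hence outcome misspecification leaves the estimand invariant, completing both legs of the double-robust identification; summing over $i\in\mathcal{M}_n$ and dividing by $m_n$ then yields $\tau^{DATT}(g)=\tau^{dr}(g)$.

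The main obstacle I anticipate is not the doubly robust cancellation itself — which becomes routine once the weight is written in the $(D-p)/(1-p)$ form — but rather (i) the careful use of Assumption \ref{assumption1} to justify replacing $Y_{it}$ by $Y_{it}(D_i,G_i)$ even though $Y_{it}(\cdot)$ depends a priori on the entire vector $\boldsymbol{d}_t$, and (ii) the normalization step, where the pointwise identity involves a $p_{1g}(i,\boldsymbol{X},\boldsymbol{A})$ weight that must be absorbed via the overlap condition implicit in the definition of $\mathcal{M}_n$ for the averages in (\ref{10}) and (\ref{28}) to align.
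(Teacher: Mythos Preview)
Your approach is essentially the paper's: both condition on $(\boldsymbol{X},\boldsymbol{A})$, apply the tower property, and handle the two correct-specification cases by showing the off-diagonal contribution vanishes. Your Case A tracks the paper's Case 1 line for line (you are in fact more careful, retaining the $p_{1g}$ prefactor the paper silently drops), and your Case B is an algebraically equivalent rearrangement of the paper's Case 2, which separates $\Delta Y_i$ from $\Delta u_{0g}$ and invokes Abadie's Lemma 3.1 for the IPW piece rather than recycling Case A.

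One caveat on your Case B cancellation: the conditional expectation $\mathbb{E}\bigl[\mathbf{1}\{G_i=g\}(D_i - p_{1g})/(1-p_{1g})\,\big|\,\boldsymbol{X},\boldsymbol{A}\bigr]$ is not zero as you assert, because $p_{1g}$ is defined as the \emph{joint} probability $\mathbb{P}(D_i=1,G_i=g\mid\boldsymbol{X},\boldsymbol{A})$ rather than the conditional probability of $\{D_i=1\}$ given $\{G_i=g\}$; the expectation works out to $p_{1g}\bigl(1-\mathbb{P}(G_i=g\mid\boldsymbol{X},\boldsymbol{A})\bigr)/(1-p_{1g})$. The paper's own Case 2 makes the identical slip (writing the residual as $\pi_{1g}-\pi_{1g}$, i.e.\ tacitly treating $1-p_{1g}$ as $p_{0g}$), so this is a shared looseness in the estimand's normalization---precisely the concern you already raise as obstacle (ii)---rather than a gap unique to your argument. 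Replacing the control-branch denominator $1-p_{1g}$ by $p_{0g}$ makes both your balancing step and the paper's cancellation go through cleanly.
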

Proposition \ref{prop3} establishes that, under mild assumptions on the exposure mapping and conditional trends, our primary target--the causal estimand \(\tau^{DATT}(g)\)--is identified (i.e., expressible in terms of observable quantities) provided that either the outcome regression model or the propensity score model is correctly specified. Consequently, our proposed estimand $\tau^{dr}(g)$ is doubly robust, yielding valid inference even under misspecification of one of the two models. Relative to approaches relying solely on outcome regression or inverse probability weighting, the doubly robust estimand imposes weaker assumptions and demonstrates greater reliability in practice.

\section{Estimation}
\subsection{Network DR-DID estimator}
To estimate the doubly robust estimand \(\tau^{dr}(g)\), we adopt a standard plug-in approach that leverages machine learning-based estimators for the relevant nuisance components. Specifically, let \(\hat{p}_{dg}(i, \boldsymbol{X}, \boldsymbol{A})\) denote the estimated generalized propensity score, \(\hat{\mu}_{t,dg}(i, \boldsymbol{X}, \boldsymbol{A})\) denote the estimated outcome regression for \(t=1,2\), and \(\Delta\hat{\mu}_{dg}\left(i, \boldsymbol{X}, \boldsymbol{A}\right)= \hat{\mu}_{2,dg}\left(i, \boldsymbol{X}, \boldsymbol{A}\right) - \hat{\mu}_{1,dg}\left(i, \boldsymbol{X}, \boldsymbol{A}\right)\). We then propose the following DR-DID estimator for \(\tau^{dr}(g)\) that allows for network interference:

\[
\hat{\tau}^{dr}(g) = \frac{1}{m_n} \sum_{i \in \mathcal{M}_n} \hat{\tau}^{dr}_i(g),
\]
where each \(\hat{\tau}^{dr}_i(g)\) is defined as:
\begin{align}\label{30}
\hat{\tau}^{dr}_i(g) = & \left(\left({D_i\mathbf{1}\{G_i=g\}}  \right) -\frac{(1-D_i)\mathbf{1}\{G_i=g\} \hat{p}_{1g}(i, \boldsymbol{X}, \boldsymbol{A})}{1-\hat{p}_{1g}(i, \boldsymbol{X}, \boldsymbol{A})} \right)\left(\Delta Y_{i} - \Delta \hat{\mu}_{0g}(i, \boldsymbol{X}, \boldsymbol{A})\right).
\end{align}

The validity of the Network DR-DID estimator $\hat{\tau}^{dr}(g)$ hinges on accurately estimating the nuisance components, the propensity score and the outcome regression function. These components are traditionally estimated using parametric models, such as logistic regression or linear outcome regression. However, such models often lack the flexibility needed to capture the complex dependencies and nonlinear interactions that arise in networked data, especially when spillovers effects or network confounding are present. To address this, we propose using Graph Neural Networks (GNNs) to estimate these nuisance functions. GNNs are a class of nonparametric machine learning models designed specifically for graph-structured data. They incorporate both individual-level covariates (node features) and network structure (neighborhood edges) to generate learned representations of each unit. Through iterative local averaging—commonly referred to as message passing—GNNs extract information from a unit's neighbors to improve prediction. This allows GNNs to flexibly approximate high-dimensional, nonlinear relationships in both the outcome and treatment assignment mechanisms, without requiring explicit model specification or manual feature construction. As a result, plugging GNNs estimator into the doubly robust framework can improve both the robustness and efficiency of causal inference in complex network settings.

In the next subsection, we provide a brief overview of the GNN architecture used to estimate the nuisance components.
\subsection{GNNs estimator for nuisance functions}
GNNs are deep learning models designed to model graph-structured data. A standard GNNs architecture consists of nested, parameterized, vector-valued functions called neurons arranged in \( L \) layers. The embedding of the \( i \)-th node at layer \( l \), denoted \( h_i^{(l)} \), is updated via the following message-passing architecture for layers \( l=1, \ldots, L \):
\begin{equation}
h_i^{(l)} = \Phi_{0l}\left(h_i^{(l-1)}, \Phi_{1l}\left(h_i^{(l-1)}, \{h_j^{(l-1)} : j \in \mathcal{N}(i)\}\right)\right),
\end{equation}
where \( \Phi_{0l}(\cdot) \) and \( \Phi_{1l}(\cdot) \) are parameterized, vector-valued functions. The embedding is initialized as \( h_i^{(0)} = x_i \), thus initially incorporating only node features. As layers progress, the embeddings incorporate increasingly richer neighborhood information. This architecture endows GNNs with several essential properties. \textit{Permutation invariance} ensures that the aggregation of neighbor embeddings is insensitive to the order in which they appear, a critical feature given that graph neighborhoods are inherently unordered. Due to the unordered nature of graph neighborhoods, the estimation functions \( p_{1g}(i, \boldsymbol{X}, \boldsymbol{A}) \) and \( \Delta\mu_{0g}(i, \boldsymbol{X}, \boldsymbol{A}) \) must be permutation invariant in the features of \( i \)'s neighbors. This ensures that the estimated values are not sensitive to arbitrary ordering of the neighborhood set. As discussed in Leung (2024)\cite{leung2024graph}, such invariance allows us to reduce from a collection of neighbor-specific functions to a single symmetric function over the neighborhood multiset.
\textit{Neighborhood scope} is controlled via the number of layers $L$, such that the final node embedding $h_i^{(L)}$ reflects information from the node’s $L$-hop neighborhood. The \textit{scalability} of GNNs arises from the fact that the learnable functions $\Phi_{0l}(\cdot)$ and $\Phi_{1l}(\cdot)$ depend solely on the dimension of node features and are independent of the graph size. As a result, GNNs can be deployed efficiently across networks of varying scales, from small graphs to large-scale systems. These structural properties make GNNs particularly effective for accurately modeling complex dependencies in observational data, thereby improving the estimation of nuisance functions such as propensity scores or conditional outcome regressions. By incorporating rich relational structure in a robust and scalable manner, GNNs estimator can more effectively adjust for confounding, particularly when outcomes or treatments exhibit network-dependent relationship. The specific choices of $\Phi_{0l}(\cdot)$ and $\Phi_{1l}(\cdot)$ define the architectural variants of the GNN and thus influence both model expressiveness and computational behavior. There exist various GNN embedding architectures, including the Graph Convolutional Network (GCN) \cite{kipf2017semi}, the Graph Isomorphism Network (GIN) \cite{xu2019powerful}, and the Principal Neighborhood Aggregation (PNA) network \cite{corso2020principal}, each differing in how neighborhood information is aggregated and combined.

We define \(\mathcal{F}_{\mathrm{GNN}}^{\mathrm{prop}}(L)\) and \(\mathcal{F}_{\mathrm{GNN}}^{\mu}(L)\) as classes of \(L\)-layer graph neural networks used to estimate the generalized propensity score and the outcome regression function, respectively. For any \(f \in \mathcal{F}_{\mathrm{GNN}}^{\mathrm{prop}}(L)\) or \(f \in \mathcal{F}_{\mathrm{GNN}}^{\mu}(L)\), we let \(f(i, \boldsymbol{X}, \boldsymbol{A})\) denote its output for node \(i\), corresponding to the final-layer embedding \(h_i^{(L)}\). The nuisance estimators \(\hat{f}_{\mathrm{GNN}}^{\mathrm{prop}}\) and \(\hat{f}_{\mathrm{GNN}}^{\mu}\) are obtained via empirical risk minimization:

\[
\hat{f}_{\mathrm{GNN}}^{\mathrm{prop}} \in \arg\min_{f \in \mathcal{F}_{\mathrm{GNN}}^{\mathrm{prop}}(L)} \sum_{i=1}^n \ell_{\mathrm{log}}\left(\mathbf{1}\{D_i = d,\, G_i = g\},\ f(i, \boldsymbol{X}, \boldsymbol{A})\right),
\]

\[
\hat{f}_{\mathrm{GNN}}^{\mu} \in \arg\min_{f \in \mathcal{F}_{\mathrm{GNN}}^{\mu}(L)} \sum_{i: D_i = 0,\, G_i = g} \ell_{\mathrm{sq}}\left(\Delta Y_{i},\ f(i, \boldsymbol{X}, \boldsymbol{A})\right),
\]
where \(\ell_{\mathrm{log}}(y, \hat{y}) = -y\hat{y} + \log(1 + e^{\hat{y}})\) is the logistic loss and \(\ell_{\mathrm{sq}}(y, \hat{y}) = 0.5(y - \hat{y})^2\) is the squared-error loss.

The estimated functions are then used to define the nuisance components:
\[
\hat{p}_{dg}(i, \boldsymbol{X}, \boldsymbol{A}) = \frac{\exp\left(\hat{f}_{\mathrm{GNN}}^{\mathrm{prop}}(i, \boldsymbol{X}, \boldsymbol{A})\right)}{1 + \exp\left(\hat{f}_{\mathrm{GNN}}^{\mathrm{prop}}(i, \boldsymbol{X}, \boldsymbol{A})\right)},
\]
\[
\hat{\mu}_{0g}(i, \boldsymbol{X}, \boldsymbol{A}) = \hat{f}_{\mathrm{GNN}}^{\mu}(i, \boldsymbol{X}, \boldsymbol{A}).
\]

The estimated nuisance functions \(\hat{p}_{dg}(i, \boldsymbol{X}, \boldsymbol{A})\) and \(\hat{\mu}_{0g}(i, \boldsymbol{X}, \boldsymbol{A})\) are then plugged into the doubly robust score defined in (\ref{30}) to deliver \(\hat{\tau}^{dr}_{i}(g), i=1,\dots, m_{n}\), which are averaged to obtain the estimator \(\hat{\tau}^{dr}(g)\) for the estmation of \(\tau^{dr}(g)\), and consequently the DATT \(\tau^{DATT}(g)\).

\section{Asymptotic theory}
\subsection{Limiting distribution}
We first discuss the limiting distribution of \(\hat{\tau}^{dr}(g)\) as \( n \to \infty \). While our analysis treats \(\bigl(\boldsymbol{X}, \boldsymbol{A}, \boldsymbol{\varepsilon}_t,\boldsymbol{\nu}_t)\) as random, the asymptotic theory conditions on \(\bigl(\boldsymbol{X}, \boldsymbol{A}\bigr)\) to avoid imposing additional assumptions on their underlying dependence structure. Define:

\begin{align}
\phi_i(g) = & \left( D_i \mathbf{1}\{G_i = g\} - \frac{(1 - D_i)\mathbf{1}\{G_i = g\} \, p_{1g}(i, \boldsymbol{X}, \boldsymbol{A})}{1 - p_{1g}(i, \boldsymbol{X}, \boldsymbol{A})} \right) \left( \Delta Y_i - \Delta u_{0g}(i, \boldsymbol{X}, \boldsymbol{A}) \right)-\tau^{DATT}(g),
\end{align}
and
\begin{equation}
  \sigma_n^2=\operatorname{Var}\left(\left.\frac{1}{\sqrt{m_n}} \sum_{i \in \mathcal{M}_n} \phi_i(g) \right\rvert\, \boldsymbol{X}, \boldsymbol{A}\right).  
\end{equation}
The following assumptions are required to guarantee the validity of our asymptotic analysis. They ensure that the estimators are well-defined and converge properly as the sample size increases.

\begin{assumption}[\textbf{Approximate Neighborhood Interference}]\label{assumption4}
For each sample size \(n\in\mathbb{N}\) there exist non-negative functions \(\gamma_{n}(s)\) and \(\eta_{n}(s)\), defined on \(\mathbb{R}_{+}\), satisfying  
\[
\sup_{n\in\mathbb{N}}\max\{\gamma_{n}(s),\eta_{n}(s)\}\xrightarrow[s\to\infty]{}0,
\]
such that for every individual \(i\in\mathcal{N}_{n}\) and period \(t\),
\[
\begin{aligned}
\mathbb{E}\!\Bigl[\,\bigl|h_{it}(\mathbf{D}_{t},\mathbf{X},\mathbf{A},\boldsymbol{\varepsilon}_{t})
        -h_{it}\bigl(\mathbf{D}^{\mathcal{N}(i,s)}_{t},\mathbf{X}^{\mathcal{N}(i,s)},
        \mathbf{A}^{\mathcal{N}(i,s)},\boldsymbol{\varepsilon}^{\mathcal{N}(i,s)}_{t}\bigr)\bigr|
        \,\bigm|\,\mathbf{D}_{t},\mathbf{X},\mathbf{A}\Bigr]
        &\le\gamma_{n}(s), \\[6pt]
\mathbb{E}\!\Bigl[\,\bigl|l_{it}(\mathbf{X},\mathbf{A},\boldsymbol{\nu}_{t})
        -l_{it}\bigl(\mathbf{X}^{\mathcal{N}(i,s)},
        \mathbf{A}^{\mathcal{N}(i,s)},\boldsymbol{\nu}^{\mathcal{N}(i,s)}_{t}\bigr)\bigr|
        \,\bigm|\,\mathbf{X},\mathbf{A}\Bigr]
        &\le\eta_{n}(s).
\end{aligned}
\]
\end{assumption}

Assumption \ref{assumption4} indicates a uniform, distance-based decay of interference within the network. Specifically, outcome models \(h_{it}\) and the propensity score model \(l_{it}\) are asymptotically insensitive to information originating beyond an \(s\)-step neighborhood of the focal node. The bounding functions \(\gamma_{n}(s)\) and \(\eta_{n}(s)\) converge to zero uniformly in \(n\), ensuring that remote nodes exert a vanishing influence as \(s\to\infty\). Hence, observations from a single, expansive network can be treated as only weakly dependent, permitting the application of classical asymptotic theorem. Leung (2022)\cite{leung2022causal} demonstrated that the ANI assumption is satisfied by a range of interference structures, such as the linear-in-means model with endogenous peer effects.

\begin{assumption}[\textbf{Moments and Overlap}]\label{assumption5}
(a) There exist constants \(M<\infty\) and \(p>4\) such that, for every sample size \(n\in\mathbb N\), every individual \(i\in\mathcal M_n\), every period t, and every treatment vector \(\boldsymbol d_t\in\{0,1\}^n\),
\[
\mathbb E\!\bigl[\,|Y_{i2}(\boldsymbol d_t)|^{\,p}\,\bigm|\,\mathbf X,\mathbf A\bigr]\;\le\;M
\quad\text{a.s.}
\]

(b) For every unit \( i \in \mathcal{M}_n \), each treatment status \( d \in \{0,1\} \), and each exposure level \( g \in \mathcal{G} \), there exists a constant \( \varepsilon > 0 \) such that
\[
\varepsilon < p_{dg}(i, \boldsymbol{X}, \boldsymbol{A}) < 1 - \varepsilon.
\]

\end{assumption}

Assumption \ref{assumption5}(a) bound \(p\)-th moments of the potential outcomes, which is a standard regularity condition, see the double machine learning literature (e.g., Chernozhukov et al., 2018\cite{chernozhukov2018double}; Farrell, 2018\cite{farrell2018deep}; Farrell et al., 2021\cite{farrell2021deep}). By contrast, Assumption \ref{assumption5}(b) is conceptually more restrictive because it links the exposure mapping, network structure, and treatment distribution. For simplicity, I assume that the overlap condition holds for every unit in the population. However, under certain specifications of the exposure mapping, this assumption might not always be valid. If violations appear, one can (i) redefine or coarsen the exposure mapping, (ii) trim units with near-zero or near-one propensities, or (iii) restrict inference to a subpopulation where credible overlap holds.


\begin{assumption}[\textbf{GNN Convergence Rates}]\label{assumption6}
For every $g\!\in\!\mathcal{G}$ and $d\!\in\!\{0,1\}$, let
$\hat p_{dg}(i,\mathbf X,\mathbf A)$ and
$\Delta\hat\mu_{dg}(i,\mathbf X,\mathbf A)$
be the first–stage GNN estimators of
$p_{dg}(i,\mathbf X,\mathbf A)$ and
$\Delta\mu_{dg}(i,\mathbf X,\mathbf A)$, respectively. Suppose the following conditions hold:
\begin{enumerate}[label=(\alph*)]

\item
\[
\begin{aligned}
\frac{1}{m_n}\sum_{i\in\mathcal M_n}
\bigl(\hat p_{dg}(i,\mathbf X,\mathbf A)-p_{dg}(i,\mathbf X,\mathbf A)\bigr)^2
&=o_p(1),\\[4pt]
\frac{1}{m_n}\sum_{i\in\mathcal M_n}
\bigl(\Delta\hat\mu_{dg}(i,\mathbf X,\mathbf A)-\Delta\mu_{dg}(i,\mathbf X,\mathbf A)\bigr)^2
&=o_p(1).
\end{aligned}
\]

\item
\[
\biggl\{
      \frac{1}{m_n}\sum_{i\in\mathcal M_n}
      (\hat p_{dg}-p_{dg})^2
\biggr\}^{1/2}
\,
\biggl\{
      \frac{1}{m_n}\sum_{i\in\mathcal M_n}
      (\Delta\hat\mu_{dg}-\Delta\mu_{dg})^2
\biggr\}^{1/2}
= o_p\!\bigl(n^{-1/2}\bigr).
\]

\item
\[
\frac{1}{m_n}\sum_{i\in\mathcal M_n}
\biggl\{\frac{\hat p_{dg}(i,\mathbf X,\mathbf A)-\mathbf{1}(D_i=d)\mathbf{1}(G_i=g)}{1-\hat p_{dg}(i,\mathbf X,\mathbf A)}
(\Delta u_{dg}(i,\mathbf X,\mathbf A)-\Delta\hat u_{dg}(i,\mathbf X,\mathbf A))\biggr\}
= o_p\!\bigl(n^{-1/2}\bigr).
\]
\end{enumerate}
\end{assumption}

These regularity conditions are well-established in the double machine learning literature. The validity of these assumptions are verified for convolutional neural networks (CNNs) in the i.i.d. setting by both Farrell (2021)\cite{farrell2021deep} and Ghasempour et al. (2024)\cite{ghasempour2024convolutional}. Extending to network data, Wang et al. (2024)\cite{wang2024graph} establish analogous \( n^{-1/2} \)-rate convergence results for GNNs, under certain architectural constraints. A recent advance by Leung (2024)\cite{leung2024graph} strengthens the required independence structure through an approximate conditional-independence assumption (his Assumption 8).  Under that assumption he shows that, for both the propensity-score model and the outcome-regression model,

\[
\frac{1}{m_n}\sum_{i\in\mathcal M_n}
\Bigl[f_t(i,\mathbf X,\mathbf A)-
      f_t\!\bigl(i,\mathbf X^{\mathcal N(i,L)},
                    \mathbf A^{\mathcal N(i,L)}\bigr)\Bigr]^2
        =o_p\!\bigl(n^{-1/2}\bigr),
\]
where the neighborhood-restricted target \(f_t\!\bigl(i,\mathbf X^{\mathcal N(i,L)},\mathbf A^{\mathcal N(i,L)}\bigr)\)  
can be consistently estimated by an \(L\)-layer GNN.  Leung further shows that choosing  
\(L\asymp \log n\) (or any slowly diverging sequence) is sufficient for the network to achieve the required approximation error while keeping the effective model complexity low. Together, these results place GNNs on essentially the same theoretical footing as classical machine-learning estimators for semiparametric causal inference.

To establish a central limit theorem for our main term, we require that the sequence \(\{\phi_{i}(g)\}_{i=1}^n\) be \(\psi\)-dependent (as in Definition C.1 by Kojevnikov et al., 2021\cite{kojevnikov2021limit}). This assumption restricts how quickly a specific dependence measure decays in relation to the growth rate of network neighborhoods. To formalize this, we first define the \(s\)-neighborhood boundary of node \(i\) as
\[
\mathcal{N}^{\partial}(i, s) = \{j \in \mathcal{N}_n: \ell(i, j) = s\},
\]
and its \(k\)th moment by
\[
\delta_n^{\partial}(s; k) = \frac{1}{n} \sum_{i=1}^n |\mathcal{N}^{\partial}(i, s)|^k.
\]
Next, we introduce
\[
\Delta_n(s, m; k) = \frac{1}{n} \sum_{i=1}^n \max_{j \in \mathcal{N}^{\partial}(i, s)} \left|\mathcal{N}(i, m) \setminus \mathcal{N}(j, s-1)\right|^k,
\]
which captures, on average, the maximal expansion of a node's \(m\)-neighborhood beyond the \((s-1)\)-neighborhood of any node on its \(s\)-boundary. Based on this, we define
\[
c_n(s, m; k) = \inf_{\alpha > 1} \Delta_n(s, m; k\alpha)^{1/\alpha} \delta_n^{\partial}\Bigl(s; \frac{\alpha}{\alpha-1}\Bigr)^{1-1/\alpha},
\]
which is a quantity that essentially measures the network density. Finally, we set
\begin{equation}\label{34}
   \psi_n(s) = \max_{i \in \mathcal{N}_n}\Bigl(\gamma_n(s/2)+\eta_n(s/2)\,[1+n(i,K)+\Lambda_n(i,s/2)\,n(i,s/2)]\Bigr), 
\end{equation}
where \(\Lambda_n(i, s/2)\) is a constant defined in the subsequent assumption. \(\psi_n(s)\) provides a bound on the covariance between \(\phi_{g}(i)\) and \(\phi_{g}(j)\) when the network distance \(\ell_{\boldsymbol{A}}(i, j)\) is at most \(s\).

\begin{assumption}[\textbf{Weak Dependence for CLT}] \label{assumption7} 
(a) The dependence coefficients are uniformly bounded. Specifically, \(\sup_{n\in\mathbb{N}}\max_{s\ge1}\psi_n(s)<\infty\) almost surely.  
(b)
Let \(p > 4\) in assumption \ref{assumption5}(b), for some sequence \(v_n \to \infty\) and for each \(k \in \{1,2\}\), the following conditions hold:
\[
\frac{1}{n^{k/2}} \sum_{s=0}^\infty c_n(s, v_n; k)\,\psi_n(s)^{1-(2+k)/p} \to 0,\quad n^{3/2}\,\psi_n(v_n)^{1-1/p} \to 0,
\]
and
\[
\limsup_{n \to \infty} \sum_{s=0}^\infty \delta_n^{\partial}(s; 2)^{1/2}\,\gamma_n(s/2)^{1-2/p} < \infty \quad \text{a.s.}
\]

\end{assumption}

The quantity \( \psi_n(s) \) measures the degree of dependence between pairs of observations \( Y_{it} \) and \( Y_{jt} \) across different individuals \( i \ne j \) at the same time period \( t \).
 As discussed in Kojevnikov et al. (2021)\cite{kojevnikov2021limit}, many network-dependent processes satisfy $\psi$-dependence. Moreover, Leung (2024, Appendix C)\cite{leung2024graph} derives the rate of \(\psi_n(s)\) under the ANI assumption (\ref{assumption4}) for observational data. This enables the use of robust inferential procedures despite the presence of approximate local network dependencies in the observational setting. The first two parts of Condition (b) in Assumption (\ref{assumption7}) coincide with Condition ND in Kojevnikov et al. (2021)\cite{kojevnikov2021limit}. The third part is an analogous requirement that guarantees the linear expansion of the doubly robust ATT estimator under network dependence. Leung (2024)\cite{leung2024graph} demonstrates that both polynomial and exponential neighborhood-growth patterns satisfy all three components of Condition (b).
\setcounter{theorem}{0}
\begin{theorem}\label{theorem1}
     Under Assumptions \ref{assumption1}-\ref{assumption7}, the Network DR-DID estimator $\hat{\tau}^{dr}\left(g\right)$ has asymptotically normal distribution centered around $\tau^{DATT}\left(g\right)$. Specifically, 

$$
\sigma_n^{-1 / 2} \sqrt{m_n}\left(\hat{\tau}^{dr}\left(g\right)-\tau^{DATT}\left(g\right)\right) \xrightarrow{d} \mathbf{N}(0,1) .
$$
\end{theorem}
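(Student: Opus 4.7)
The plan is to decompose $\sqrt{m_n}\bigl(\hat{\tau}^{dr}(g) - \tau^{DATT}(g)\bigr)$ into an \emph{oracle term} built from the true nuisance functions and a \emph{first-stage remainder}, apply a central limit theorem for $\psi$-dependent random fields to the oracle term, and show that the remainder is $o_p(1)$. Concretely, I would write
\begin{align*}
\sqrt{m_n}\bigl(\hat{\tau}^{dr}(g) - \tau^{DATT}(g)\bigr)
 \;=\; \frac{1}{\sqrt{m_n}}\sum_{i\in\mathcal M_n}\phi_i(g) \;+\; R_n,
\end{align*}
where $R_n$ collects $\hat{\tau}^{dr}_i(g)-\bigl(\phi_i(g)+\tau^{DATT}(g)\bigr)$. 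By the doubly robust structure established in Proposition \ref{prop3}, the linear-in-error terms in $(\hat p_{1g}-p_{1g})$ and $(\Delta\hat\mu_{0g}-\Delta\mu_{0g})$ are Neyman-orthogonal around the true nuisances, so $R_n$ reduces to a cross-product term plus pieces already covered by Assumption \ref{assumption6}(c).

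For the remainder analysis, I would expand $\hat{\tau}^{dr}_i(g)-\tau^{dr}_i(g)$ algebraically using the identity $\hat p/(1-\hat p)-p/(1-p) = (\hat p-p)/\bigl[(1-\hat p)(1-p)\bigr]$ and substitute $\Delta Y_i = \Delta\mu_{0g}(i,\boldsymbol X,\boldsymbol A)+U_i$ on the untreated side. Under Assumption \ref{assumption5}(b) the denominators are bounded away from zero, so the dominant product term obeys
\begin{align*}
\left|\frac{1}{\sqrt{m_n}}\sum_{i\in\mathcal M_n}(\hat p_{1g}-p_{1g})(\Delta\hat\mu_{0g}-\Delta\mu_{0g})\right|
\;\lesssim\; \sqrt{m_n}\,\biggl\{\tfrac{1}{m_n}\sum(\hat p-p)^2\biggr\}^{1/2}\biggl\{\tfrac{1}{m_n}\sum(\Delta\hat\mu-\Delta\mu)^2\biggr\}^{1/2},
\end{align*}
which is $o_p(1)$ because $m_n\le n$ and Assumption \ref{assumption6}(b) gives the product rate $o_p(n^{-1/2})$. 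The residual linear-in-$U_i$ pieces vanish by Assumption \ref{assumption6}(c) after observing that the NCPT assumption (Assumption \ref{assumption3}) together with Assumption \ref{assumption2} makes the conditional mean of $U_i$ given $(D_i,G_i,\boldsymbol X,\boldsymbol A)$ equal to zero on the relevant stratum.

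For the oracle term, I would invoke the CLT for $\psi$-dependent random fields of Kojevnikov et al.\ (2021) applied to $\{\phi_i(g)\}_{i\in\mathcal M_n}$ with distance $\ell_{\boldsymbol A}(\cdot,\cdot)$. The summand has conditional mean zero given $(\boldsymbol X,\boldsymbol A)$, uniformly bounded $p$-th moment for some $p>4$ (Assumption \ref{assumption5}), and dependence coefficients bounded by $\psi_n(s)$ in (\ref{34}). Assumption \ref{assumption7}(a)--(b) supplies exactly the network-growth and decay conditions required by that CLT, delivering
\begin{align*}
\sigma_n^{-1/2}\,\frac{1}{\sqrt{m_n}}\sum_{i\in\mathcal M_n}\phi_i(g)\;\xrightarrow{d}\;\mathbf N(0,1).
\end{align*}
Combining this with $R_n=o_p(1)$ and Slutsky's theorem yields the stated limit.

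The main obstacle is verifying that $\phi_i(g)$ genuinely inherits $\psi$-dependence at the rate $\psi_n(s)$ posted in (\ref{34}). This requires showing that each ingredient of $\phi_i(g)$---the indicator $\mathbf 1\{G_i=g\}$, the generalized propensity $p_{1g}(i,\boldsymbol X,\boldsymbol A)$, the regression $\Delta\mu_{0g}(i,\boldsymbol X,\boldsymbol A)$, and the change $\Delta Y_i$---can be approximated in $L^1$ by a function measurable with respect to data in $\mathcal N(i,s/2)$, with approximation error bounded by $\gamma_n(s/2)$ or $\eta_n(s/2)$ times a polynomial in $n(i,s/2)$. This is where the locality of the exposure mapping (Assumption \ref{assumption1}) and the ANI bounds on $h_{it}$ and $l_{it}$ (Assumption \ref{assumption4}) must be propagated through a Lipschitz-type argument, using Assumption \ref{assumption5}(b) to control the denominator $1-p_{1g}$. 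Once this $\psi$-dependence of the oracle score is pinned down, the rest of the argument is a relatively standard orchestration of the oracle CLT with the DR remainder bound.
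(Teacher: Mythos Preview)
Your overall architecture---decompose into an oracle sum $\frac{1}{\sqrt{m_n}}\sum_i\phi_i(g)$ plus a first-stage remainder, apply the Kojevnikov et al.\ CLT to the oracle using $\psi$-dependence, and kill the remainder via Assumption \ref{assumption6}---matches the paper's proof, and your closing paragraph on how to verify $\psi$-dependence of $\phi_i(g)$ (localize each ingredient to $\mathcal N(i,s/2)$ using Assumptions \ref{assumption1}, \ref{assumption4}, \ref{assumption5}(b)) is exactly the content of the paper's supporting lemma.

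There is, however, a genuine gap in your remainder analysis. After expanding $\hat\tau^{dr}_i(g)-\tau^{dr}_i(g)$ you obtain three pieces, not two: (i) the cross-product $(\hat p-p)(\Delta\hat\mu-\Delta\mu)$, handled by Assumption \ref{assumption6}(b) via Cauchy--Schwarz as you indicate; (ii) the term $\frac{\hat p_{1g}-D_i\mathbf 1\{G_i=g\}}{1-\hat p_{1g}}(\Delta\mu_{0g}-\Delta\hat\mu_{0g})$, which is what Assumption \ref{assumption6}(c) actually controls; and (iii) the term
\[
R_{1}=\frac{1}{\sqrt{m_n}}\sum_{i\in\mathcal M_n}(1-D_i)\mathbf 1\{G_i=g\}\,\frac{\hat p_{1g}-p_{1g}}{(1-\hat p_{1g})(1-p_{1g})}\,\bigl(\Delta Y_i-\Delta\mu_{0g}\bigr),
\]
which is the ``linear-in-$U_i$'' piece you mention. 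This term is \emph{not} covered by Assumption \ref{assumption6}(c), and Neyman orthogonality plus the conditional mean-zero property of $U_i$ is insufficient here: there is no cross-fitting, the nuisance $\hat p_{1g}$ is fitted on the same network, and the $U_i$'s are mutually dependent across units, so the usual i.i.d.\ DML argument does not go through.

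The paper controls $R_1$ by bounding $\mathbf E[R_1^2]$ directly: one uses the $\psi$-dependence of $\{\Delta Y_i-\Delta\mu_{0g}\}$ to show that $\mathbf E[(\Delta Y_i-\Delta\mu_{0g})(\Delta Y_j-\Delta\mu_{0g})\mid\boldsymbol D,\boldsymbol X,\boldsymbol A]\lesssim\gamma_n(\ell_{\boldsymbol A}(i,j)/2)^{1-2/p}$, groups pairs by network distance $s$, and applies Cauchy--Schwarz across the $s$-boundary sets. This is precisely where the \emph{third} condition in Assumption \ref{assumption7}(b), $\limsup_n\sum_s\delta_n^\partial(s;2)^{1/2}\gamma_n(s/2)^{1-2/p}<\infty$, enters---it guarantees the $s$-sum converges---and Assumption \ref{assumption6}(a) supplies the $o_p(1)$ for the remaining propensity-error factor. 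You never invoke that third condition, and without it the $R_1$ bound does not close.
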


\subsection{Variance Estimation}
We now focus on the variance estimator for large‐sample inference.  To estimate the asymptotic variance, we utilize the network HAC (heteroskedasticity and autocorrelation consistent) estimator as described by Kojevnikov et al. (2021)\cite{kojevnikov2021limit}:

\begin{equation}
\hat{\sigma}^2 = \frac{1}{m_n} \sum_{i \in \mathcal{M}_n} \sum_{j \in \mathcal{M}_n} \left(\hat{\tau}^{dr}_i(g) - \hat{\tau}^{dr}(g)\right)\left(\hat{\tau}^{dr}_j(g) - \hat{\tau}^{dr}(g)\right) \mathbf{1}\{\ell_{\boldsymbol{A}}(i, j) \leq B_n\}.  
\end{equation}

We adopt the uniform-kernel variance estimator and choose the bandwidth as  

\begin{equation}
B_n \;=\;
\begin{cases}
\Bigl\lceil\dfrac{1}{2+\gamma}\,\mathcal{L}(\boldsymbol A)\Bigr\rceil, 
& \text{if } \displaystyle
   \mathcal{L}(\boldsymbol A)\;<\;
   2\,\dfrac{\log n}{\log \bar\delta(\boldsymbol A)},\\[8pt]
\displaystyle
\Bigl\lceil\bigl[\mathcal{L}(\boldsymbol A)\bigr]^{\tfrac{1}{2+\gamma}}\Bigr\rceil,
& \text{otherwise},
\end{cases}
\end{equation}
where \(\lceil\cdot\rceil\) denotes rounding up to the nearest integer; \(\bar\delta(\boldsymbol A)=\dfrac{1}{n}\sum_{i,j}A_{ij}\) is the network’s average degree; \(\mathcal{L}(\boldsymbol A)\) is the average path length; and \(\gamma>0\) is a fixed positive constant.

Thus, the bandwidth adapts to the network’s size and density while accounting for first-stage estimation error. This bandwidth rule builds on the scheme proposed by Leung (2022)\cite{leung2022causal}, Leung (2024)\cite{leung2024graph}. The next theorem states the asymptotic properties of \(\hat{\sigma}^{2}\).  
Because we condition on \((\mathbf X,\mathbf A)\), \(\hat{\sigma}^{2}\) is not guaranteed to be consistent—exactly as in Leung (2022)\cite{leung2022causal}.  Nevertheless, the same argument shows it is typically asymptotically conservative.  We introduce the required notation below:

$$
\mathcal{J}_n(s, m)=\left\{(i, j, k, l) \in \mathcal{N}_n^4: k \in \mathcal{N}(i, m), l \in \mathcal{N}(j, m), \ell_{\boldsymbol{A}}(i, j)=s\right\}.
$$

\begin{assumption}[\textbf{Weak Dependence for \(\hat{\sigma}\)}]\label{assumption8}
\begin{enumerate}[label=(\alph*)]
\item  For some \(\epsilon\in(0,1)\) and a bandwidth \(B_n\to\infty\),
$
\lim_{n\to\infty}
\frac{1}{n}\sum_{s=0}^{\infty}
c_n\!\left(s,B_n;2\right)\,
\psi_n(s)^{\,1-\epsilon}
\;=\;0
\quad\text{a.s.}
$

\item  
$\frac{1}{n}\sum_{i=1}^{n} n\!\left(i,B_n\right)
\;=\;o_p\!\bigl(\sqrt{n}\bigr).
$

\item  

$
\frac{1}{n}\sum_{i=1}^{n} n\!\left(i,B_n\right)^{2}
\;=\;O_p\!\bigl(\sqrt{n}\bigr).
$

\item  

$\sum_{s=0}^{n}
\bigl|\mathcal J_n\!\left(s,B_n\right)\bigr|\,
\psi_n(s)
\;=\;o_p\!\bigl(n^{2}\bigr).$

\end{enumerate}
\end{assumption}
This assumption regulates the growth rate of the neighborhood size and the bandwidth $B_n$, ensuring that the estimator $\hat{\sigma}^2$ remains consistent and well-behaved in large samples by balancing bias and variance. Assumption \ref{assumption8}(a) corresponds to the first part of Assumption \ref{assumption7}(b). Parts (b)–(d) align with Assumptions 7(b)–(d) in Leung (2022)\cite{leung2022causal}, which serve to characterize the bias properties of the variance estimator. These conditions are satisfied under both polynomial and exponential neighborhood growth network.

\begin{theorem}
Define ${\phi}^*_{i}(g)$ by replacing $\tau^{DATT}\left(g\right)$ in the definition of $\phi_{i}(g)$ with $\tau_i^{DATT}\left(g\right)$. Let

$$
\begin{aligned}
\hat{\sigma}_*^2 & =\frac{1}{m_n} \sum_{i \in \mathcal{M}_n} \sum_{j \in \mathcal{M}_n} {\phi}^*_{i}(g) {\phi}^*_{j}(g) \mathbf{1}\left\{\ell_{\boldsymbol{A}}(i, j) \leqslant B_n\right\} \quad \text { and } \\
R_n & =\frac{1}{m_n} \sum_{i \in \mathcal{M}_n} \sum_{j \in \mathcal{M}_n}\left(\tau_i^{DATT}\left(g\right)-\tau^{DATT}\left(g\right)\right)\left(\tau_j^{DATT}\left(g\right)-\tau^{DATT}\left(g\right)\right) \mathbf{1}\left\{\ell_{\boldsymbol{A}}(i, j) \leqslant B_n\right\}.
\end{aligned}
$$
Under Assumption \ref{assumption8} and the assumptions of Theorem \ref{theorem1}, we have that
$$
\hat{\sigma}^2=\hat{\sigma}_*^2+R_n+o_p(1) \quad \text { and } \quad\left|\hat{\sigma}_*^2-\sigma_n^2\right| \xrightarrow{p} 0 .
$$
\end{theorem}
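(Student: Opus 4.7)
My plan is to establish the two claims in sequence. For the identity $\hat\sigma^{2}=\hat\sigma_*^{2}+R_n+o_p(1)$, I would begin from the pointwise decomposition
\[
\hat\tau_i^{dr}(g)-\hat\tau^{dr}(g)
= \phi_i^{*}(g) + \bigl(\tau_i^{DATT}(g)-\tau^{DATT}(g)\bigr) + \varepsilon_i,
\]
where $\varepsilon_i=(\hat\tau_i^{dr}(g)-\tau_i^{dr}(g))-(\hat\tau^{dr}(g)-\tau^{DATT}(g))$ collects the unit-level first-stage plug-in error plus a common centering error that is $O_p(m_n^{-1/2})$ by Theorem~\ref{theorem1}. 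Substituting into the quadratic form $(\hat\tau_i^{dr}-\hat\tau^{dr})(\hat\tau_j^{dr}-\hat\tau^{dr})$ and truncating to pairs with $\ell_{\boldsymbol A}(i,j)\le B_n$ yields the two ``pure'' pieces $\hat\sigma_*^{2}$ and $R_n$ together with the cross terms that must be driven to zero.

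The cross terms fall into two groups. The $\phi_i^{*}(\tau_j^{DATT}-\tau^{DATT})$ pieces can be recast as $\sum_j(\tau_j^{DATT}-\tau^{DATT})\sum_{i:\ell_{\boldsymbol A}(i,j)\le B_n}\phi_i^{*}$; because the outer factor is deterministic and bounded under Assumption~\ref{assumption5}(a), while the inner partial sum has variance of order $n(j,B_n)$ through the $\psi_n$-dependence inherited from Assumptions~\ref{assumption4} and~\ref{assumption7}(a), a Cauchy--Schwarz step combined with the bandwidth budget $\overline n(B_n)=o_p(\sqrt n)$ from Assumption~\ref{assumption8}(b)-(c) delivers $o_p(1)$. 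The $\varepsilon_i$-terms are handled by combining the $L^{2}$ rate in Assumption~\ref{assumption6}(a), the product-rate bound in Assumption~\ref{assumption6}(b), and the asymmetric IPW/outcome-residual control in Assumption~\ref{assumption6}(c); the latter is calibrated in exactly the same way as the linearization used in the proof of Theorem~\ref{theorem1}, so that after multiplication by the neighborhood factor $\overline n(B_n)$ the total contribution remains $o_p(1)$.

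For $|\hat\sigma_*^{2}-\sigma_n^{2}|\xrightarrow{p}0$, I would invoke the network-HAC consistency scheme of Kojevnikov, Marmer and Song (2021). Write $\hat\sigma_*^{2}-\sigma_n^{2}=\bigl(\hat\sigma_*^{2}-\mathbb E[\hat\sigma_*^{2}\mid \boldsymbol X,\boldsymbol A]\bigr)+\bigl(\mathbb E[\hat\sigma_*^{2}\mid\boldsymbol X,\boldsymbol A]-\sigma_n^{2}\bigr)$. Assumption~\ref{assumption7}(a) together with the ANI condition (Assumption~\ref{assumption4}) guarantees that $\{\phi_i^{*}(g)\}$ is $\psi_n$-dependent with uniformly bounded coefficients, so the conditional-bias term reduces to $-m_n^{-1}\sum_{\ell_{\boldsymbol A}(i,j)>B_n}\mathrm{Cov}(\phi_i^{*},\phi_j^{*}\mid\boldsymbol X,\boldsymbol A)$; the standard $\psi$-dependence covariance inequality (combined with the $p$th-moment bound in Assumption~\ref{assumption5}(a)) dominates this by the tail sum $\sum_{s>B_n}\delta_n^{\partial}(s;2)^{1/2}\psi_n(s)^{1-2/p}$, which is $o(1)$ by the third part of Assumption~\ref{assumption7}(b). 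The stochastic variance component is bounded by $n^{-2}\sum_s|\mathcal J_n(s,B_n)|\psi_n(s)$ through the covariance bound for products of $\psi$-dependent variables, and is $o_p(1)$ by Assumption~\ref{assumption8}(a) and~(d); Assumption~\ref{assumption8}(b)-(c) keeps the HAC kernel normalization well-behaved as in Leung (2022).

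The main obstacle I anticipate is the mixed cross term $\varepsilon_i\phi_j^{*}$ inside the truncated double sum of the first claim: the factors are correlated across nearby units, and the HAC truncation multiplies each cross-covariance by the neighborhood factor $n(i,B_n)$, which itself diverges slowly. The reconciliation is that Assumption~\ref{assumption6}(b)-(c) is designed precisely so that the product and asymmetric residual-interaction remainders are $o_p(n^{-1/2})$, while Assumption~\ref{assumption8}(b) caps the average neighborhood size at $o_p(\sqrt n)$; the two budgets balance to yield $o_p(1)$. Because the whole analysis conditions on $(\boldsymbol X,\boldsymbol A)$, the quantity $R_n$ is deterministic and generally nonvanishing, which is why the theorem keeps $R_n$ as an additive term rather than asserting $\hat\sigma^{2}\xrightarrow{p}\sigma_n^{2}$, and the resulting inference is only asymptotically conservative, exactly as in Leung (2022).
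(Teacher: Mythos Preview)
Your proposal is correct and reaches the same conclusions, but the paper organizes the argument differently and somewhat more economically. Rather than expanding $\hat\tau_i^{dr}(g)-\hat\tau^{dr}(g)$ directly into the three pieces $\phi_i^{*}(g)+(\tau_i^{DATT}-\tau^{DATT})+\varepsilon_i$ and then tracking all cross terms at once, the paper introduces an intermediate infeasible HAC,
\[
\sigma^2 \;=\; \frac{1}{m_n}\sum_{i,j\in\mathcal M_n}\phi_i(g)\phi_j(g)\,\mathbf 1\{\ell_{\boldsymbol A}(i,j)\le B_n\},
\]
and proceeds in two steps: first $|\hat\sigma^{2}-\sigma^{2}|=o_p(1)$, then $\sigma^{2}=\hat\sigma_*^{2}+R_n+o_p(1)$. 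The first step isolates \emph{all} first-stage plug-in error in one Cauchy--Schwarz bound, since $\hat\phi_i-\phi_i$ is exactly your $\varepsilon_i$ and the rate $m_n^{-1}\sum_i(\hat\phi_i-\phi_i)^2=o_p(n^{-1/2})$ follows from the linearization in Theorem~\ref{theorem1}; Assumption~\ref{assumption8}(c) then controls the neighborhood factor. The second step reduces to the single cross term $\phi_i^{*}(\tau_j^{DATT}-\tau^{DATT})$, for which the paper simply cites Leung (2022, Theorem~4). Your direct route is equivalent---the $\varepsilon_i\phi_j^{*}$, $\varepsilon_i(\tau_j^{DATT}-\tau^{DATT})$ and $\varepsilon_i\varepsilon_j$ pieces you enumerate are precisely what the paper absorbs into $|\hat\sigma^{2}-\sigma^{2}|$---but the intermediate object $\sigma^{2}$ avoids having to re-derive separate bounds for each of those cross terms. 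For the second claim both arguments invoke Proposition~4.1 of Kojevnikov, Marmer and Song (2021); your bias/variance decomposition spells out the content of that proposition, whereas the paper cites it directly.
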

This extends Proposition 4.1 of Kojevnikov et al. (2021)\cite{kojevnikov2021limit} and Theorem 4 of Leung (2022)\cite{leung2022causal} to accommodate doubly robust ATT estimators. Note that \( R_n \) is a HAC estimator of the variance of the unit-level contrasts \( \tau_i^{DATT}(g) \), in which case \( \hat{\sigma}^2 \) would be asymptotically conservative.

\section{Simulations}
In this simulation, we demonstrate the finite-sample performance of the estimators proposed for DATT. For the data generating process, we simulated a network \(\boldsymbol{A} \) comprising 2000 individuals based on a random geometric graph model, which defines the adjacency matrix \(\boldsymbol{A} \) by setting
\[
A_{ij} = \mathbf{1}\left\{\|\rho_i - \rho_j\| \leq r_n\right\},
\]
where the positions \(\{\rho_i,\rho_j\}_{i=1}^{n}\) are independently and uniformly drawn from the unit square \([0,1]^2\), and the radius parameter \( r_n \) is specified as \( r_n = \sqrt{5/(\pi n)} \). The simulated random geometric graph has an average path length of approximately 39.4.

We consider a two-period panel data structure, with the outcome equation for the first period generated as:
\begin{equation}\label{36}
  Y_{\text{pre}, i} = 0.5
+ \frac{\sum_{j=1}^{n} A_{ij} X_j}{\sum_{j=1}^{n} A_{ij}} 
+ X_i 
+ \epsilon_i 
+ \frac{\sum_{j=1}^{n} A_{ij} \epsilon_j}{\sum_{j=1}^{n} A_{ij}},  
\end{equation}
where \(\{X_i\}_{i=1}^n\) are i.i.d. draws from a discrete uniform distribution on \(\{0, 0.25, 0.5, 0.75, 1\}\), and \(\{\epsilon_i\}_{i=1}^n\) are i.i.d. \(\mathbf{N}(0,1)\) random variables. The treatment variable \( D_i \) is generated according to the following equation:
\[
D_i = \mathbf{1}\left\{
0.5 
+ 1.5 \frac{\sum_{j=1}^{n} A_{ij} D_j}{\sum_{j=1}^{n} A_{ij}} 
+  \frac{\sum_{j=1}^{n} A_{ij} X_j}{\sum_{j=1}^{n} A_{ij}} 
- X_i 
+ \nu_i 
+ \frac{\sum_{j=1}^{n} A_{ij} \nu_j}{\sum_{j=1}^{n} A_{ij}} > 0
\right\},
\]
where the error terms \(\{\nu_i\}_{i=1}^{n}\) are i.i.d. as \(\mathbf{N}(0,1)\). The outcome equation for the post-treatment period is defined as
\[
Y_{\text{post}, i} = 0.5 
+ 0.8 \frac{\sum_{j=1}^{n} A_{ij} Y_j}{\sum_{j=1}^{n} A_{ij}} 
+ 10 \frac{\sum_{j=1}^{n} A_{ij} X_j}{\sum_{j=1}^{n} A_{ij}} 
+ X_i 
+ \mu_i 
+ \frac{\sum_{j=1}^{n} A_{ij} \mu_j}{\sum_{j=1}^{n} A_{ij}}
\]
The error terms \( \{u_i\} \) are also i.i.d. as standard normal. The true value of the estimand $\tau^{DATT}$ is zero under this design. 

We compare two estimators: GNNs and nonparametric generalized linear model (NGLM) estimators. The GNNs are implemented using the PNA architecture~\cite{corso2020principal}, with the number of layers \( L \in \{1, 2, 3\} \).  
Both \( \phi_0^{(l)} \) and \( \phi_1^{(l)} \) are single-layer multilayer perceptrons (MLPs) with hidden dimension \( H \in \{1, 3, 5\} \). As for NGLM, we apply polynomial basis expansions of degree 1, 2, or 3 to estimate the nuisance functions. The degree of the polynomial plays a role analogous to the number of layers 
L in GNNs, as both determine the order of neighborhood effects captured by the model.

\begin{table}[htbp]
\centering
\caption{Simulation Results for GNNs and NGLM}
\resizebox{\textwidth}{!}{
\begin{tabular}{cccccccccc}
\toprule
& \multicolumn{3}{c}{$L=1$} & \multicolumn{3}{c}{$L=2$} & \multicolumn{3}{c}{$L=3$} \\
\cmidrule(lr){2-4} \cmidrule(lr){5-7} \cmidrule(lr){8-10}
$n$ &  & 2000 &  &  & 2000 &  &  & 2000 &  \\
\# treated         &  & 1105 &  &  & 1105 &  &  & 1105 &  \\
$H$                & 1 & 3 & 5 & 1 & 3 & 5 & 1 & 3 & 5 \\
\midrule

$\hat{\tau}^{dr}_{GNN}$ & 0.0469 & 0.0652 & 0.0110 & 0.0366 & 0.0128 & 0.0012 & 0.0192 & 0.0172 & 0.0320 \\
SE & 0.1564 & 0.1766 & 0.1787 & 0.1556 & 0.1854 & 0.1674 & 0.1588 & 0.1511 & 0.2502 \\

CI & 0.8080 & 0.9440 & 0.9540 & 0.8140 & 0.9660 & 0.8800 & 0.8160 & 0.8360 & 0.8660 \\
SE IID& 0.0979 & 0.1079 & 0.1108 & 0.0972 & 0.1161 & 0.1136 & 0.1009 & 0.0937 & 0.2049 \\
CI IID             & 0.5968 & 0.6934 & 0.6442 & 0.6332 & 0.5675 & 0.6955 & 0.5841 & 0.5980 & 0.5321  \\ 
\midrule

$\hat{\tau}^{dr}_{NGLM}$ &  & 0.0836 &  &  & 0.0764 &  &  & 0.0774 &  \\ 
SE            &  & 0.098 &  &  & 0.098  & &  & 0.098 &  \\

\bottomrule
\end{tabular}}
\label{maintab}
\end{table}

Table \ref{maintab} presents simulation results based on 1000 replications for the random geometric graph. The upper panel reports results with nuisance parameters estimated by the GNNs method, and the lower panel reports results with nuisance parameters estimated the NGLM method employing polynomial sieve methods, where the polynomial order is also indicated by $L$. For convenience, we refer to the former as the GNN method and the latter as the NGLM method. The row labeled $\hat{\tau}^{dr}_{GNN}$ reports the average value of the $\hat{\tau}^{dr}_{GNN}$ estimates, and 
the row labeled $\hat{\tau}^{dr}_{NGLM}$ reports the average value of the $\hat{\tau}^{dr}_{NGLM}$ estimates, both of which also reflect the bias due to the fact that the true parameter $\tau^{DATT}$ is zero. ``SE'' denotes standard error constructed by the HAC estimator. ``CI'' displays the coverage rate of confidence interval constructed with the HAC variance estimator. ``SE IID" denotes standard error computed under the assumption of independence and identical distribution, with ``CI IID" being the coverage rate of confidence interval constructed with the i.i.d. variance estimator.

The bias results presented in the first row of Table \ref{maintab} demonstrate that the GNNs method provides reliable causal estimates across all specifications of $L$ and $H$. Notably, GNNs with 
$L=2$ layers consistently outperform other configurations, achieving the lowest bias regardless of the hidden dimension $H$. Furthermore, for a fixed number of layers, bias tends to decrease as the hidden dimension increases.

The HAC standard errors are substantially larger than those computed under the i.i.d. assumption, suggesting the presence of both heteroskedasticity and autocorrelation in the error terms. Coverage rates generally improve with larger hidden dimensions, and our proposed method produces more accurate confidence intervals compared to those derived from i.i.d. standard errors. However, as is common with HAC-type estimators, our confidence intervals exhibit a slight degree of undercoverage.

The NGLM method also delivers reliable causal estimates for all choices of polynomial order, though the magnitude of its bias is larger than that of the GNNs method. This suggests that GNNs capture a different function of \((\boldsymbol{X}, \boldsymbol{A})\) than the \(W_i\) variables alone, one that better adjusts for confounding effects.

\section{An Application}
We employ the method proposed in this paper to assess the impact of mask mandate policy on the spread of COVID-19 in the US. Our analysis is based on a balanced panel constructed from data used by Chernozhukov et al. (2021b)\cite{chernozhukov2021association}, which consist of 2,510 US counties observed weekly from April 1, 2020 to December 2, 2020. A total of 736 counties remained untreated throughout the study period and thus serve as the control group. To accommodate variation in treatment timing across the remaining counties and maintain a clean $2 \times 2$ DID design, we select a subset of counties for the treatment group. Specifically, we focus on the 343 counties that adopted mask mandates in week 28—the week with the highest number of implementations. We define the pre-treatment period as all weeks before week 28 ($t=1$) and the post-treatment period as week 28 onward ($t=2$).

In this study, \( Y_{it} \) denotes the logarithm of reported COVID-19 cases in county \( i \) at time period \( t \). The main treatment variable \( D_{it} \) represents a policy indicator for mask mandates. The set of control variables \( X_{i} \) includes measures of foot traffic to K-12 schools and colleges (sourced from SafeGraph), along with other policy indicators such as stay-at-home orders and bans on gatherings of more than 50 persons, as well as the weekly growth rate in COVID-19 testing. We construct an adjacency matrix based on the geographic distance matrix between counties identified by their FIPS codes, where a link is assumed to exist between two counties if the distance between them is less than 400 kilometers, and no link otherwise. 

For illustration, we consider the estimation of the direct treatment effect \( \tau^{\text{DATT}}(1) \). 
We compare two approaches: our proposed Network DR-DID and the DR-DID of Sant’Anna and Zhao (2020)\cite{sant2020doubly}. Both estimators target a direct effect of mask mandate policy, but differ in how they account for network-related interference. Our estimator conditions on having at least one treated neighbor and explicitly controls for the network confounding spillovers, while the DR-DID method assumes no treatment and confounding spillovers. 

\begin{table}[htbp]
\centering
\caption{Comparison of ATT Estimates}
\begin{tabular}{lcc}
\toprule
Method & ATT Estimate & Standard Error \\
\midrule
Network DR-DID       & -0.7021$^{***}$ & 0.2983 \\
DR-DID & -0.9363$^{***}$ & 0.2586 \\
\bottomrule
\end{tabular}
\begin{tablenotes}
\small
\item \textit{Notes:} Robust standard errors are reported in parentheses.\\
$^{***}p<0.01$, $^{**}p<0.05$, $^{*}p<0.1$
\end{tablenotes}
\label{tab:att_comparison}
\label{table2}
\end{table}

As shown in Table \ref{table2}, both estimation methods yield significantly negative causal effect values, indicating that the mask mandate policy effectively and significantly reduced the number of COVID-19 cases. It is important to note that the magnitude of the Network DR-DID estimate is smaller than that of the conventional DR-DID estimate. This finding aligns with our intuition that the protective effect of wearing a mask diminishes when one’s neighbors also wear masks, compared to scenarios where neighbors do not. This discrepancy suggests that traditional estimates may suffer from bias due to unaccounted spillover effects and network confounders.

\section{Conclusion}
In this article, we develop doubly robust estimators for the Direct Average Treatment Effect on the Treated (DATT) and the Spillover Average Treatment Effect on the Treated (SATT) in network-based DID designs, where conditional parallel trends hold after adjusting for high-dimensional network confounders. The proposed estimators remain consistent for the DATT (or SATT) under the condition that either the propensity score model or the outcome regression model is correctly specified. We establish their large-sample properties and demonstrate that, under mild regularity conditions, the doubly robust estimators are asymptotically normal as the network size increases. The practical utility of our method is illustrated through Monte Carlo simulations and an empirical application.

Our findings can be extended to several other settings of practical relevance. First, the network-based analytical framework developed in this study can be adapted to alternative identification strategies in panel data settings, particularly those relying on sequential conditional independence assumptions. Second, while our analysis focuses on contemporaneous treatment effects, incorporating both dynamic treatment effects and spillover effects simultaneously would introduce additional methodological challenges. Finally, the current framework assumes a static network structure, whereas real-world networks often exhibit dynamic evolution. Extending the causal inference framework to account for network dynamics—such as by modeling network formation or selection processes over time—represents a promising direction for future research.
\newpage
\bibliographystyle{plainnat}
\bibliography{reference}

\newpage

\section*{Appendix A: Proofs of Results}

\subsection*{Proof of  Proposition 1:}

\begin{proof}
\[
\begin{aligned} 
\tau^{\text{obs}} 
&= \frac{1}{m_n} \sum_{i \in \mathcal{M}_n}  \Big[  \mathbb{E}\big(Y_{i2} - Y_{i1} \mid D_i =1, x_i\big) - \mathbb{E}\big(Y_{i2} - Y_{i1} \mid D_i =0, x_i\big) \Big] \\ 
&= \frac{1}{m_n} \sum_{i \in \mathcal{M}_n} \sum_{g \in \mathcal{G}} \mathbb{E}\big(Y_{i2} - Y_{i1} \mid D_i =1, G_i = g, x_i\big) \cdot \mathbb{P}(G_i = g \mid D_i =1, x_i) \\
&\quad - \frac{1}{m_n} \sum_{i \in \mathcal{M}_n} \sum_{g \in \mathcal{G}} \mathbb{E}\big(Y_{i2} - Y_{i1} \mid D_i =0, G_i = g, x_i\big) \cdot \mathbb{P}(G_i = g \mid D_i =0, x_i). \\
&\quad \text{(by iterated expectations law)}
\end{aligned}
\]

\[
\begin{aligned} 
\tau^{\text{DATT}} 
&= \frac{1}{m_n} \sum_{i \in \mathcal{M}_n} \sum_{g \in \mathcal{G}} \mathbb{E}\big(Y_{i2}(1,g) - Y_{i2}(0,g) \mid D_i =1, G_i = g, x_i\big) \cdot \mathbb{P}(G_i = g \mid D_i =1, x_i) \\ 
&= \frac{1}{m_n} \sum_{i \in \mathcal{M}_n} \sum_{g \in \mathcal{G}} \Big[ 
\mathbb{E}\big( Y_{i2}(1, g) - Y_{i1}(0, 0) \mid D_i = 1, G_i = g, x_i \big) \\
&\quad - \mathbb{E}\big( Y_{i2}(0, g) - Y_{i1}(0, 0) \mid D_i = 0, G_i = g, x_i \big) \Big] \cdot \mathbb{P}(G_i = g \mid D_i = 1, x_i) \\
&\quad \text{(by network conditional parallel trends)} \\
&= \frac{1}{m_n} \sum_{i \in \mathcal{M}_n} \sum_{g \in \mathcal{G}} \mathbb{E}\big(Y_{i2} - Y_{i1} \mid D_i =1, G_i = g, x_i\big) \cdot \mathbb{P}(G_i = g \mid D_i =1, x_i) \\
&\quad - \frac{1}{m_n} \sum_{i \in \mathcal{M}_n} \sum_{g \in \mathcal{G}} \mathbb{E}\big(Y_{i2} - Y_{i1} \mid D_i =0, G_i = g, x_i\big) \cdot \mathbb{P}(G_i = g \mid D_i =1, x_i). \\
&\quad \text{(by consistency and the no anticipation assumption)}
\end{aligned}
\]
Then, we have
\[
\begin{aligned}
\tau^{\text{obs}} - \tau^{\text{DATT}} 
= & \frac{1}{m_n} \sum_{i \in \mathcal{M}_n} \sum_{g \in \mathcal{G}} 
\mathbb{E}\big(Y_{i2} - Y_{i1} \mid D_i = 0, G_i = g, x_i\big) \\
&\quad \cdot \Big[ \mathbb{P}(G_i = g \mid D_i = 1, x_i) - \mathbb{P}(G_i = g \mid D_i = 0, x_i) \Big] \\
= & \frac{1}{m_n} \sum_{i \in \mathcal{M}_n} \sum_{g \in \mathcal{G}} \Big[ 
\mathbb{E}\big(Y_{i2} - Y_{i1} \mid D_i = 0, G_i = g, x_i\big) - 
\mathbb{E}\big(Y_{i2} - Y_{i1} \mid D_i = 0, G_i = g', x_i\big) \Big] \\
&\quad \cdot \Big[ \mathbb{P}(G_i = g \mid D_i = 1, x_i) - \mathbb{P}(G_i = g \mid D_i = 0, x_i) \Big].
\end{aligned}
\]
\[
\begin{aligned}
&\quad \text{(since the subtraction of a constant baseline term }\mathbb{E}(Y_{i2} - Y_{i1} \mid D_i = 0, G_i = g', x_i) \\
&\qquad  \text{ leaves the expression unchanged}) \\
\end{aligned}
\]
\end{proof}

\subsection*{Proof of  Proposition 2:}

\begin{proof}
\[
\begin{aligned} 
\tau^{\text{obs}} 
&= \frac{1}{m_n} \sum_{i \in \mathcal{M}_n}  \Big[  \mathbb{E}\big(Y_{i2} - Y_{i1} \mid D_i =1, x_i\big) - \mathbb{E}\big(Y_{i2} - Y_{i1} \mid D_i =0, x_i\big) \Big] \\ 
&= \frac{1}{m_n} \sum_{i \in \mathcal{M}_n} \sum_{g \in \mathcal{G}}\sum_{u \in \mathcal{U}} \mathbb{E}\big(Y_{i2} - Y_{i1} \mid D_i =1, G_i = g, U_i = u,x_i\big) \\
    &\quad\quad\quad\cdot \mathbb{P}(U_i = u \mid D_i =1,G_i = g, x_i)\cdot \mathbb{P}(G_i = g \mid D_i =1, x_i) \\
&\quad - \frac{1}{m_n} \sum_{i \in \mathcal{M}_n} \sum_{g \in \mathcal{G}} \sum_{u \in \mathcal{U}}\mathbb{E}\big(Y_{i2} - Y_{i1} \mid D_i =0, G_i = g,U_i = u, x_i\big)  \\
    &\quad\quad\quad\cdot \mathbb{P}(U_i = u \mid D_i =0,G_i = g, x_i)\cdot \mathbb{P}(G_i = g \mid D_i =0, x_i). \\
&\quad \text{(by iterated expectations law)}
\end{aligned}
\]

\[
\begin{aligned} 
\tau^{\text{DATT}} 
&= \frac{1}{m_n} \sum_{i \in \mathcal{M}_n} \sum_{g \in \mathcal{G}} \sum_{u \in \mathcal{U}} \mathbb{E}\big(Y_{i2}(1,g) - Y_{i2}(0,g) \mid D_i =1, G_i = g,U_i = u, x_i\big) \\
&\quad\quad\quad\cdot \mathbb{P}(U_i = u \mid D_i =1,G_i = g, x_i)\cdot \mathbb{P}(G_i = g \mid D_i =1, x_i) \\ 
&= \frac{1}{m_n} \sum_{i \in \mathcal{M}_n} \sum_{g \in \mathcal{G}}\sum_{u \in \mathcal{U}} \Big[ 
\mathbb{E}\big( Y_{i2}(1, g) - Y_{i1}(0, 0) \mid D_i = 1, G_i = g,U_i = u, x_i \big) \\
&\quad - \mathbb{E}\big( Y_{i2}(0, g) - Y_{i1}(0, 0) \mid D_i = 0, G_i = g, U_i = u,x_i \big) \Big] \\
&\quad\quad\quad\cdot \mathbb{P}(U_i = u \mid D_i =1,G_i = g, x_i)\cdot \mathbb{P}(G_i = g \mid D_i =1, x_i) \\
&\quad \text{(by network conditional parallel trends)} \\
&= \frac{1}{m_n} \sum_{i \in \mathcal{M}_n} \sum_{g \in \mathcal{G}}\sum_{u \in \mathcal{U}} \mathbb{E}\big(Y_{i2} - Y_{i1} \mid D_i =1, G_i = g,U_i = u, x_i\big) \\
&\quad\quad\quad\cdot \mathbb{P}(U_i = u \mid D_i =1,G_i = g, x_i)\cdot \mathbb{P}(G_i = g \mid D_i =1, x_i) \\
&\quad - \frac{1}{m_n} \sum_{i \in \mathcal{M}_n} \sum_{g \in \mathcal{G}}\sum_{u \in \mathcal{U}} \mathbb{E}\big(Y_{i2} - Y_{i1} \mid D_i =0, G_i = g,U_i = u, x_i\big) \\
&\quad\quad\quad\cdot \mathbb{P}(U_i = u \mid D_i =1,G_i = g, x_i)\cdot \mathbb{P}(G_i = g \mid D_i =1, x_i). \\
&\quad \text{(by consistency and the no anticipation assumption)}
\end{aligned}
\]
Then, we have
\[
\begin{aligned}
\tau^{\text{obs}} - \tau^{\text{DATT}} 
= & \frac{1}{m_n} \sum_{i \in \mathcal{M}_n} \sum_{g \in \mathcal{G}} \sum_{u \in \mathcal{U}}
\mathbb{E}\big(Y_{i2} - Y_{i1} \mid D_i = 0, G_i = g,U_i=u, x_i\big) \\
&\quad \cdot \Big[ \mathbb{P}(U_i = u \mid D_i =1,G_i = g, x_i)\cdot \mathbb{P}(G_i = g \mid D_i =1, x_i) \\
&\quad\quad- \mathbb{P}(U_i = u \mid D_i =0,G_i = g, x_i)\cdot \mathbb{P}(G_i = g \mid D_i =0, x_i) \Big] \\
= & \frac{1}{m_n} \sum_{i \in \mathcal{M}_n} \sum_{g \in \mathcal{G}}\sum_{u \in \mathcal{U}} \Big[ 
\mathbb{E}\big(Y_{i2} - Y_{i1} \mid D_i = 0, G_i = g,U_i = u, x_i\big) \\
&- 
\mathbb{E}\big(Y_{i2} - Y_{i1} \mid D_i = 0, G_i = g',U_i = u', x_i\big) \Big] \\
&\quad \cdot \Big[ \mathbb{P}(U_i = u \mid D_i =1,G_i = g, x_i)\cdot \mathbb{P}(G_i = g \mid D_i =1, x_i) \\
&\quad\quad- \mathbb{P}(U_i = u \mid D_i =0,G_i = g, x_i)\cdot \mathbb{P}(G_i = g \mid D_i =0, x_i) \Big]. \\
\end{aligned}
\]
\[
\begin{aligned}
&\quad \text{(since the subtraction of a constant baseline term }\mathbb{E}(Y_{i2} - Y_{i1} \mid D_i = 0, G_i = g',U_i = u', x_i) \\
&\qquad  \text{ which do not depend on $g$ and $u$.}) \\
\end{aligned}
\]
Then, under conditional independence between \(Z_i\) and \(G_i\), the bias become:
\[
\begin{aligned}
\tau^{\text{obs}} - \tau^{\text{DATT}} 
= & \frac{1}{m_n} \sum_{i \in \mathcal{M}_n} \sum_{g \in \mathcal{G}} \sum_{u \in \mathcal{U}}
\mathbb{E}\big(Y_{i2} - Y_{i1} \mid D_i = 0, G_i = g,U_i=u, x_i\big) \\
&\quad \cdot \Big[ \mathbb{P}(U_i = u \mid D_i =1,G_i = g, x_i)\cdot \mathbb{P}(G_i = g \mid  x_i) \\
&\quad\quad- \mathbb{P}(U_i = u \mid D_i =0,G_i = g, x_i)\cdot \mathbb{P}(G_i = g \mid  x_i) \Big].
\end{aligned}
\]
After marginalizing over \(G_i\), the expression simplifies to
\[
\begin{aligned}
\tau^{\text{obs}} - \tau^{\text{DATT}} = 
& \frac{1}{m_n} \sum_{i \in \mathcal{M}_n} \sum_{u \in \mathcal{U}} \Big[
\mathbb{E}\big(Y_{i2} - Y_{i1} \mid D_i = 0, U_i = u, x_i\big) \\
&\quad - \mathbb{E}\big(Y_{i2} - Y_{i1} \mid D_i = 0, U_i = u', x_i\big) \Big] \\
&\quad \cdot \Big[ \mathbb{P}(U_i = u \mid D_i =1, x_i) - \mathbb{P}(U_i = u \mid D_i =0, x_i) \Big].
\end{aligned}
\]
\end{proof}

\subsection*{Proof of  Proposition 3:}
\begin{proof}
Recall that:
   \[
\begin{aligned}
{\tau}(g) = \mathbb{E}_D\left(\left({D_i\mathbf{1}\{G_i=g\}}  \right) -\frac{(1-D_i)\mathbf{1}\{G_i=g\} {p}_{1g}(i, \boldsymbol{X}, \boldsymbol{A})}{1-{p}_{1g}(i, \boldsymbol{X}, \boldsymbol{A})} \right)\left(\Delta Y_{i} - \Delta {u}_{0g}(i, \boldsymbol{X}, \boldsymbol{A})\right).
\end{aligned}
\]
For notational simplicity, \( \mathbb{E}_D \) denote the finite population expectation conditional on $\boldsymbol{X}$ and $\boldsymbol{A}.$

\textbf{Case 1: When outcome regression models are correctly speciﬁed.}
In this case, we have that $ \Delta {\mu}_{0t}(i, \boldsymbol{X}, \boldsymbol{A})= \Delta m_{0t}(i, \boldsymbol{X}, \boldsymbol{A})$ a.s., i.e. the outcome regression models are correctly specified.

\[
\begin{aligned}
{\tau}^{dr}(g)
&= \mathbb{E}_{D} \left[ \left( D_i \mathbf{1}\{G_i=g\} - \frac{(1 - D_i)\mathbf{1}\{G_i=g\}{p}_{1g}(i,\boldsymbol{X},\boldsymbol{A})}{1 - {p}_{1g}(i,\boldsymbol{X},\boldsymbol{A})} \right)\left( \Delta Y_i - \Delta {m}_{0g}(i,\boldsymbol{X},\boldsymbol{A}) \right) \right] \\[10pt]
&= \mathbb{E}_{D}\left[ D_i \mathbf{1}\{G_i=g\}\left(\Delta Y_i - \Delta {m}_{0g}(i,\boldsymbol{X},\boldsymbol{A})\right)\right] \\[5pt]
&\quad-\;\mathbb{E}_{D}\left[\frac{(1 - D_i)\mathbf{1}\{G_i=g\}{p}_{1g}(i,\boldsymbol{X},\boldsymbol{A})}{1 - {p}_{1g}(i,\boldsymbol{X},\boldsymbol{A})}\left(\Delta Y_i - \Delta {m}_{0g}(i,\boldsymbol{X},\boldsymbol{A})\right)\right] \\[12pt]
&= \mathbb{E}_{D}\left[D_i \mathbf{1}\{G=g\}\left(\Delta Y - \Delta {m}_{0g}(i,\boldsymbol{X},\boldsymbol{A})\right)\right] \\[5pt]
&\quad-\;\mathbb{E}_{D}\left[\frac{{p}_{1t}(i,\boldsymbol{X},\boldsymbol{A})}{1 - {p}_{1g}(i,\boldsymbol{X},\boldsymbol{A})}\left(\Delta Y_i - \Delta {m}_{0t}(i,\boldsymbol{X},\boldsymbol{A})\right)\mid D=0,G=g\right]p_{0g} \\[10pt]
&= \mathbb{E}_{D}\left[\left(\Delta m_{1g}(\boldsymbol{X},\boldsymbol{A}) - \Delta m_{0g}(\boldsymbol{X},\boldsymbol{A})\right)\mid D=1,G=g\right] \\[5pt]
&\quad-\;\mathbb{E}_{D}\left[\frac{{p}_{1g}(i,\boldsymbol{X},\boldsymbol{A})}{1 - {p}_{1g}(i,r\boldsymbol{X},\boldsymbol{A})}\left(\Delta m_{0g}(i,\boldsymbol{X},\boldsymbol{A}) - \Delta m_{0g}(i,\boldsymbol{X},\boldsymbol{A})\right)\mid D=0,G=g\right] p_{0g} \\[10pt]
&= \tau^{DATT}(g).
\end{aligned}
\]
where the third step applies the law of iterated expectations, and the final step is justified by the conditional parallel trends assumption.

\textbf{Case 2: When propensity score model is correctly speciﬁed.}
In this case, we have that 
\[
\begin{aligned}
{\tau}^{dr}(g)
&= \mathbb{E}_{D} \left[ \left( D_i \mathbf{1}\{G_i=g\} - \frac{(1 - D_i)\mathbf{1}\{G_i=g\}{\pi}_{1g}(i,\boldsymbol{X},\boldsymbol{A})}{1 - {\pi}_{1g}(i,\boldsymbol{X},\boldsymbol{A})} \right)\left( \Delta Y_i - \Delta {u}_{0g}(i,\boldsymbol{X},\boldsymbol{A}) \right) \right] \\[10pt]
&= \mathbb{E}_{D} \left( \left( D_i \mathbf{1}\{G_i=g\} - \frac{(1 - D_i)\mathbf{1}\{G_i=g\}\pi_{1g}(i,\boldsymbol{X},\boldsymbol{A})}{1 - \pi_{1g}(i,\boldsymbol{X},\boldsymbol{A})} \right)\Delta Y_i  \right) \\[10pt]
&\quad - \mathbb{E}_{D} \left[ \left( D_i \mathbf{1}\{G_i=g\} - \frac{(1 - D_i)\mathbf{1}\{G_i=g\} \pi_{1g}(i,\boldsymbol{X},\boldsymbol{A})}{1 - \pi_{1g}(i,\boldsymbol{X},\boldsymbol{A})} \right) \Delta {u}_{0g}(i,\boldsymbol{X},\boldsymbol{A}) \right]\\
&= \tau^{DATT}(g) - \mathbb{E}_{D}\mathbb{E}\left[(\pi_{1g}-\pi_{1g})\Delta {u}_{0g}(i, \boldsymbol{X}, \boldsymbol{A})\right]\\
&= \tau^{DATT}(g).
\end{aligned}
\]

The third equality follows from Lemma 3.1 in Abadie (2005)\cite{abadie2005semiparametric} and the law of iterated expectations, reducing exactly to the formulation in their paper when the indicator \( G\) is omitted.

\end{proof}

\subsection*{Proof of  Theorem 1:}
Before proving Theorem 1, we first introduce a definition and a lemma. 
\begin{definition}
    A triangular array \(\{Z_i\}_{i=1}^n\) is conditionally \(\psi\)-dependent given \(\mathcal{F}_n\) if there exists a constant \(C > 0\) and an \(\mathcal{F}_n\)-measurable sequence \(\{\psi_n(s)\}_{s,n\in\mathbb{N}}\) with \(\psi_n(0) = 1\) for all \(n\) such that for every \(n, h, h' \in \mathbb{N}\), every \(s > 0\), every function \(f \in \mathcal{L}_h\) and \(f' \in \mathcal{L}_{h'}\), and every pair \((H, H') \in \mathcal{P}_n(h, h'; s)\), we have
\[
\Big|\operatorname{Cov}\Big(f(\boldsymbol{Z}_H),\, f'(\boldsymbol{Z}_{H'})\Big)\Big|
\le C\, h\, h'\,\Big(\|f\|_{\infty} + \operatorname{Lip}(f)\Big)
\Big(\|f'\|_{\infty} + \operatorname{Lip}(f')\Big)
\psi_n(s)
\]
almost surely; here, \(\psi_n(s)\) is called the dependence coefficient of the array.
\end{definition}

\begin{lemma}
    Under Assumptions \ref{assumption4}, \ref{assumption5}, \ref{assumption6}(a), \ref{assumption6}(b) hold, then for any \(g\in \mathcal{G}\), the sequence \(\{\phi_{i}(g)\}_{i=1}^n\) is conditionally \(\psi\)-dependent given \((\boldsymbol{X}, \boldsymbol{A})\) as per Definition 1, with the dependence coefficient \(\psi_n(s)\) defined by (\ref{34}).
\end{lemma}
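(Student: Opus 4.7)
The plan is to follow the standard ANI blueprint: construct a local approximation $\tilde\phi_i^{(s/2)}(g)$ of $\phi_i(g)$ that depends only on primitives inside $\mathcal{N}(i,s/2)$, show that these localized versions are conditionally independent across well-separated index sets (using the tacit assumption that $\boldsymbol{\varepsilon}_t$ and $\boldsymbol{\nu}_t$ have conditionally independent coordinates given $(\mathbf{X},\mathbf{A})$), and bound the covariance in Definition 1 by the $L^1$ approximation error, which will line up exactly with $\psi_n(s)$ in (\ref{34}).

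First I would decompose $\phi_i(g)$ into its random and deterministic pieces. Conditional on $(\mathbf{X},\mathbf{A})$, the nuisance functions $p_{1g}(i,\mathbf{X},\mathbf{A})$, $\Delta u_{0g}(i,\mathbf{X},\mathbf{A})$, and the centering constant $\tau^{DATT}(g)$ are non-random, so $\phi_i(g)$ is a Lipschitz function of the random triple $(D_i,G_i,\Delta Y_i)$. By Assumption \ref{assumption4}, replacing $h_{it}$ and $l_{it}$ by their $s/2$-localized counterparts yields $L^1$ approximation errors bounded by $\gamma_n(s/2)$ and $\eta_n(s/2)$ respectively. By Assumption \ref{assumption1}, once $s/2\ge K$, $G_i$ is a function of the $n(i,K)$ treatments $\{D_j: j\in\mathcal{N}(i,K)\}$ and the local adjacency; substituting each $D_j$ by its $s/2$-local version propagates an error of order $n(i,K)\,\eta_n(s/2)$ into $\tilde G_i^{(s/2)}$. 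The overlap condition in Assumption \ref{assumption5}(b) keeps the weight $p_{1g}/(1-p_{1g})$ bounded by $(1-\varepsilon)^{-1}$, and its sensitivity to perturbations of the $s/2$-neighborhood data is captured by the local Lipschitz constant $\Lambda_n(i,s/2)$, contributing a further $\Lambda_n(i,s/2)\,n(i,s/2)\,\eta_n(s/2)$ term. Collecting these contributions gives
$$\mathbb{E}\bigl[\,|\phi_i(g)-\tilde\phi_i^{(s/2)}(g)|\,\bigm|\,\mathbf{X},\mathbf{A}\bigr]\;\lesssim\;\gamma_n(s/2)+\eta_n(s/2)\bigl[1+n(i,K)+\Lambda_n(i,s/2)\,n(i,s/2)\bigr],$$
which matches (\ref{34}) up to an absolute constant absorbed into the $C$ of Definition 1.

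Next I would verify the covariance bound. Fix disjoint subsets $H,H'\subset\mathcal{N}_n$ with $\ell_{\mathbf{A}}(H,H')\ge s$, so that $\bigcup_{i\in H}\mathcal{N}(i,s/2)$ and $\bigcup_{j\in H'}\mathcal{N}(j,s/2)$ are disjoint. Because each $\tilde\phi_i^{(s/2)}(g)$ is measurable with respect to the primitives in its $s/2$-neighborhood and those primitives are conditionally independent across the two unions given $(\mathbf{X},\mathbf{A})$, the conditional covariance of $f(\tilde\phi_H^{(s/2)})$ and $f'(\tilde\phi_{H'}^{(s/2)})$ vanishes. Write the telescoping identity
$$\operatorname{Cov}\bigl(f(\phi_H),f'(\phi_{H'})\bigr)=\operatorname{Cov}\bigl(f(\phi_H)-f(\tilde\phi_H),f'(\phi_{H'})\bigr)+\operatorname{Cov}\bigl(f(\tilde\phi_H),f'(\phi_{H'})-f'(\tilde\phi_{H'})\bigr),$$
and bound each term using $|f(\phi_H)-f(\tilde\phi_H)|\le\operatorname{Lip}(f)\sum_{i\in H}|\phi_i-\tilde\phi_i|$ together with $\|f'\|_\infty$ (and symmetrically). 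Applying the per-unit bound from the previous step and taking the max over $i$ delivers the target inequality with dependence coefficient $\psi_n(s)$ and a factor of $h\,h'\,(\|f\|_\infty+\operatorname{Lip}(f))(\|f'\|_\infty+\operatorname{Lip}(f'))$.

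The main obstacle I anticipate is the Lipschitz factor attached to $\eta_n(s/2)$. Perturbing $\boldsymbol{\nu}_t$ inside $\mathcal{N}(i,s/2)$ changes not only $\tilde D_i$ and the $n(i,K)$ entries of $\tilde G_i$ but also, indirectly, the nuisance weights through their dependence on the local adjacency-covariate block, with sensitivity governed by $\Lambda_n(i,s/2)$. Quantifying this sensitivity uniformly in $i$ and $n$ — while using Assumption \ref{assumption5}(b) to keep $(1-p_{1g})^{-1}$ away from infinity and Assumption \ref{assumption5}(a) to control the moments of $\Delta Y_i$ that multiply it — is the bookkeeping that yields the bracketed factor $[1+n(i,K)+\Lambda_n(i,s/2)\,n(i,s/2)]$ in the definition of $\psi_n(s)$. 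Once this bound is in hand, the $\psi$-dependence claim follows directly from the steps above.
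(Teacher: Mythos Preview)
Your overall architecture is exactly the paper's: build an $s/2$-local surrogate $\tilde\phi_i^{(s/2)}(g)$, use conditional independence of the surrogates across well-separated index sets, and telescope the covariance so that only the $L^1$ approximation error survives. That part is fine and matches the paper step for step.

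The gap is in your accounting for the $\Lambda_n(i,s/2)\,n(i,s/2)\,\eta_n(s/2)$ term. You attribute it to ``sensitivity of the nuisance weights to perturbations of the $s/2$-neighborhood data,'' but you already (correctly) observed that $p_{1g}(i,\mathbf X,\mathbf A)$ and $\Delta u_{0g}(i,\mathbf X,\mathbf A)$ are non-random given $(\mathbf X,\mathbf A)$. Perturbing $\boldsymbol\nu_t$ cannot move them at all, so the weights contribute nothing beyond the bounded multiplier $(1-\varepsilon)^{-1}$ you already invoked. In the paper, $\Lambda_n(i,s)$ is the local Lipschitz constant of the \emph{outcome} map $h_{it}$ with respect to the treatment vector (Assumption~\ref{assumption9}). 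The term enters because localizing $\Delta Y_i$ requires replacing every treatment $D_j$, $j\in\mathcal N(i,s/2)$, by its $s/2$-local version $D_j^{(s/2)}$; each replacement costs at most $\eta_n(s/2)$ in $L^1$, and the outcome's sensitivity to each coordinate is $\Lambda_n(i,s/2)$, yielding $\Lambda_n(i,s/2)\,n(i,s/2)\,\eta_n(s/2)$ for the outcome gap (this is the paper's Lemma~3). The weight gap, by contrast, is governed only by $\eta_n(s/2)\,[1+n(i,K)]$ via the change in $D_i$ and the exposure indicator (the paper's Lemma~2). Once you reroute $\Lambda_n$ through $h_{it}$ rather than the propensity score, your sketch is correct and coincides with the paper's proof.
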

\begin{proof}
Let \(\mathcal F_n\) be the \(\sigma\)-algebra generated by \((\boldsymbol X,\boldsymbol A)\), \((h,h')\in\mathbb N^2\), \((f,f')\in\mathcal L_h\times\mathcal L_{h'}\), \(s>0\), and \((H,H')\in\mathcal P_n(h,h';s)\).  
Fix \(g\in\mathcal G\) and write
\[
\phi_i(g)\;=\;\underbrace{\Big(D_i\,\mathbf 1\{G_i=g\}-\frac{(1-D_i)\,\mathbf 1\{G_i=g\}\,p_{1g}(i,\boldsymbol X,\boldsymbol A)}{1-p_{1g}(i,\boldsymbol X,\boldsymbol A)}\Big)}_{=:W_i(g)}
\cdot\Big(\Delta Y_i-\Delta u_{0g}(i,\boldsymbol X,\boldsymbol A)\Big)\;-\;\tau^{DATT}(g).
\]
Define \(Z_i=\phi_i(g)\), \(\boldsymbol Z_H=(Z_i)_{i\in H}\), \(\xi=f(\boldsymbol Z_H)\), and similarly \(\zeta=f'(\boldsymbol Z_{H'})\).

For fix s, take \(D_{jt}^{(s/2)} = l_{jt}\big(\boldsymbol X^{\mathcal N(j,s/2)},\boldsymbol A^{\mathcal N(j,s/2)},\boldsymbol\nu^{\mathcal N(j,s/2)}\big)\) and $\boldsymbol{D}_{\mathcal{N}\left(i, {s^{\prime}}/2\right)}^{(s/2)}=\left(D_j^{(s/2)}\right)_{j \in \mathcal{N}\left(i, {s^{\prime}}/2\right)}$, define the $s/2$-local exposure indicator \(\mathbf 1_i^{(s/2)}(g)=\mathbf 1\{G(i,\boldsymbol D_{\mathcal{N}\left(i, {s}/2\right)}^{(s/2)},\boldsymbol A_{\mathcal{N}\left(i, {s}/2\right)})=g\}\), the $s/2$-local difference \(\Delta Y_i^{(s/2)} =\Delta h_{it}(\boldsymbol D_{\mathcal{N}\left(i, {s}/2\right)}^{(s/2)},\boldsymbol X^{\mathcal N(i,s/2)},\boldsymbol A^{\mathcal N(i,s/2)},\boldsymbol\epsilon^{\mathcal N(i,s/2)} \), and the $s/2$-local weight
\[
W_i^{(s/2)}(g)\;=\;D_i^{(s/2)}\,\mathbf 1_i^{(s/2)}(g)\;-\;\frac{(1-D_i^{(s/2)})\,\mathbf 1_i^{(s/2)}(g)\,p_{1g}(i,\boldsymbol X,\boldsymbol A)}{1-p_{1g}(i,\boldsymbol X,\boldsymbol A)}.
\]
Set
\[
Z_i^{(s/2)} \;=\; W_i^{(s/2)}(g)\,\Big(\Delta Y_i^{(s/2)}-\Delta u_{0g}(i,\boldsymbol X,\boldsymbol A)\Big)\;-\;\tau^{DATT}(g).
\]

Hence,
 \(\big(Z_i^{(s/2)}\big)_{i\in H}\perp\!\!\!\perp \big(Z_j^{(s/2)}\big)_{j\in H'}\mid\mathcal F_n\), then we have
\[
\begin{aligned}
\big|\mathrm{Cov}(\xi,\zeta\mid \mathcal F_n)\big|
&\le \big|\mathrm{Cov}(\xi-\xi^{(s/2)},\zeta\mid\mathcal F_n)\big|+\big|\mathrm{Cov}(\xi^{(s/2)},\zeta-\zeta^{(s/2)}\mid\mathcal F_n)\big|\\
&\le 2\|f'\|_\infty\,\mathbf E\!\left[\,|\xi-\xi^{(s/2)}|\ \big|\ \mathcal F_n\right]
+2\|f\|_\infty\,\mathbf E\!\left[\,|\zeta-\zeta^{(s/2)}|\ \big|\ \mathcal F_n\right]\\
&\le 2\!\left(h\|f'\|_\infty\mathrm{Lip}(f)+h'\|f\|_\infty\mathrm{Lip}(f')\right)
\max_{i\in\mathcal N_n}\mathbf E\!\left[\,|Z_i-Z_i^{(s/2)}|\ \big|\ \mathcal F_n\right].
\end{aligned}
\]
Thus it remains to bound \(\max_i \mathbf E[|Z_i-Z_i'|\mid \mathcal F_n]\). Write
\[
Z_i-Z_i^{(s/2)} \;=\; \underbrace{\big(W_i(g)-W_i^{(s/2)}(g)\big)}_{\text{weight gap}}
\cdot\big(\Delta Y_i-\Delta u_{0g}\big)\;+\;
W_i^{(s/2)}(g)\cdot\underbrace{\big(\Delta Y_i-\Delta Y_i^{(s/2)}\big)}_{\text{outcome gap}}.
\]
Hence, for some constant \(C_0>0\),
\[
\mathbf E\!\left[\,|Z_i-Z_i^{(s/2)}|\ \big|\ \mathcal F_n\right]
\;\le\; C_0\Big(
\mathbf E\!\left[\,|W_i(g)-W_i^{(s/2)}(g)|\ \big|\ \mathcal F_n\right]
+ \mathbf E\!\left[\,|\Delta Y_i-\Delta Y_i^{(s/2)}|\ \big|\ \mathcal F_n\right]\Big).
\]
Under Lemma 2,
\[
\mathbf E\!\left[\,|W_i(g)-W_i^{(s/2)}(g)|\ \big|\ \mathcal F_n\right]
\;\le\; C_{1}\Big(\eta_n(s/2)+n(i,K)\eta_n(s/2)\Big).
\]
Under Lemma 3,
\[
\left|\mathbf E\!\left[\,|\Delta Y_i-\Delta Y_i^{(s/2)}|\ \big|\ \mathcal F_n\right]\right|
\;\le\; 2\gamma_n(s/2)+\Lambda_n(i,s/2)\,n(i,s/2)\,\eta_n(s/2).
\]
Then there exists \(C_2>0\) such that
\[
\max_{i\in\mathcal N_n}\mathbf E\!\left[\,|Z_i-Z_i'|\ \big|\ \mathcal F_n\right]
\;\le\;
C_2\cdot
\underbrace{\max_{i\in\mathcal N_n}\Big(\gamma_n(s/2)+\eta_n(s/2)\,[1+n(i,K)+\Lambda_n(i,s/2)\,n(i,s/2)]\Big)}_{=:~\psi_n(s)}.
\]
\end{proof}

\begin{assumption}[\textbf{Local Lipschitz Continuity}]\label{assumption9}
For each $t\in\{1,2\}$ there exists $\Lambda_n(i,s)>0$ such that for all $\mathbf d,\mathbf d'\in\{0,1\}^n$,
    \[
    \big|h_{it}(\mathbf d^{\mathcal{N}(i,s)},\mathbf X,\mathbf A,\boldsymbol\varepsilon_t^{\mathcal{N}(i,s)})
    -h_{it}(\mathbf d'^{\mathcal{N}(i,s)},\mathbf X,\mathbf A,\boldsymbol\varepsilon_t^{\mathcal{N}(i,s)})\big|
    \ \le\ \Lambda_n(i,s)\sum_{j\in {\mathcal{N}(i,s)}}|d_j-d'_j|.
    \]

\end{assumption}

\begin{lemma}
Fix $s$, and abbreviate
$
D'_j=l_{jt}\!\left(\boldsymbol X^{\mathcal N(j,s)},\boldsymbol A^{\mathcal N(j,s)},\boldsymbol\nu^{\mathcal N(j,s)}\right),
\quad
\boldsymbol D'_B=(D'_j)_{j\in B},\ B\subseteq\mathcal N_n,
$
and define, for any exposure value $g\in\mathcal G$,
\[
\mathbf 1_i(g)=\mathbf 1\!\left\{G\!\big(i,\boldsymbol D^{\mathcal N(i,K)},\boldsymbol A\big)=g\right\},\qquad
\mathbf 1_i'(g)=\mathbf 1\!\left\{G\!\big(i,\boldsymbol D'^{\mathcal N(i,K)},\boldsymbol A\big)=g\right\}.
\]
Under Assumption \ref{assumption1} and \ref{assumption4} hold,
\[
\mathbf E\!\left[\,\big|\,\mathbf 1_i(g)-\mathbf 1_i'(g)\,\big| \ \Big| \ \boldsymbol X,\boldsymbol A\right]
\ \le\ n(i,K)\,\eta_n(s).
\]

\end{lemma}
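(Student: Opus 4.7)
\subsection*{Proof proposal for Lemma 3}

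The plan is to reduce the bound on the expected difference of indicators to a simple union bound over nodes in the $K$-neighborhood, and then invoke the ANI bound from Assumption \ref{assumption4} coordinate-wise. The starting observation is that $\mathbf{1}_i(g)$ and $\mathbf{1}_i'(g)$ are both $\{0,1\}$-valued, so $|\mathbf{1}_i(g)-\mathbf{1}_i'(g)|$ is itself the indicator that the two exposure evaluations $G(i,\boldsymbol D^{\mathcal N(i,K)},\boldsymbol A)$ and $G(i,\boldsymbol D'^{\mathcal N(i,K)},\boldsymbol A)$ disagree. By Assumption \ref{assumption1} (locality of exposure mapping), the exposure at $i$ depends on the treatment vector only through its restriction to $\mathcal N(i,K)$, so disagreement of the two exposures can occur only when $\boldsymbol D^{\mathcal N(i,K)}\neq\boldsymbol D'^{\mathcal N(i,K)}$, i.e.\ when $D_j\neq D_j'$ for some $j\in\mathcal N(i,K)$.

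Next I would convert this existential disagreement into a sum via a union bound and use the fact that $D_j,D_j'\in\{0,1\}$ to write $\mathbf 1\{D_j\neq D_j'\}=|D_j-D_j'|$. This yields the pointwise bound
\[
|\mathbf 1_i(g)-\mathbf 1_i'(g)|
\;\le\;\mathbf 1\!\bigl\{\boldsymbol D^{\mathcal N(i,K)}\neq\boldsymbol D'^{\mathcal N(i,K)}\bigr\}
\;\le\;\sum_{j\in\mathcal N(i,K)}|D_j-D_j'|.
\]
Taking conditional expectation given $(\boldsymbol X,\boldsymbol A)$, exchanging the finite sum with the expectation, and then applying Assumption \ref{assumption4} (the second inequality, which gives $\mathbb E[|D_j-D_j'|\mid\boldsymbol X,\boldsymbol A]\le\eta_n(s)$ uniformly in $j$) delivers
\[
\mathbb E\!\left[\,|\mathbf 1_i(g)-\mathbf 1_i'(g)|\ \Big|\ \boldsymbol X,\boldsymbol A\right]
\;\le\;\sum_{j\in\mathcal N(i,K)}\eta_n(s)\;=\;n(i,K)\,\eta_n(s),
\]
which is exactly the claimed bound.

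The main subtlety, and essentially the only thing one has to be careful about, is the appeal to Assumption \ref{assumption1}: one must note that both $\boldsymbol D$ and $\boldsymbol D'$ are evaluated on the \emph{same} adjacency matrix $\boldsymbol A$ and the \emph{same} $K$-neighborhood $\mathcal N(i,K)=\mathcal N_{\boldsymbol A}(i,K)$, so the three conditions in Assumption \ref{assumption1} collapse to the single requirement that $\boldsymbol D^{\mathcal N(i,K)}=\boldsymbol D'^{\mathcal N(i,K)}$, which is what legitimates the first inequality above. Everything after that step is a routine union bound plus linearity of conditional expectation, and no moment or integrability condition beyond what is already embedded in Assumption \ref{assumption4} is needed, since $|\mathbf 1_i(g)-\mathbf 1_i'(g)|\le 1$ makes measurability and integrability automatic.
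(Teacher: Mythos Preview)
Your proposal is correct and follows essentially the same approach as the paper: both arguments establish the pointwise inequality $|\mathbf 1_i(g)-\mathbf 1_i'(g)|\le\sum_{j\in\mathcal N(i,K)}|D_j-D_j'|$, take conditional expectations, and apply Assumption~\ref{assumption4} termwise. The only cosmetic difference is that the paper obtains the pointwise bound via a telescoping (one-coordinate-at-a-time) argument, whereas you obtain it more directly via the union bound $\mathbf 1\{\exists j: D_j\ne D_j'\}\le\sum_j\mathbf 1\{D_j\ne D_j'\}$; both routes are equally valid and equally elementary.
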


\begin{proof}
Let $J:=\mathcal N(i,K)$ and enumerate $J=\{j_1,\dots,j_m\}$ with $m=|J|$. Define a sequence of treatment vectors by changing one coordinate in 
$J$ at a time:
\[
\boldsymbol D^{(0)}=\boldsymbol D,\qquad
\boldsymbol D^{(r)}=\text{ same as }\boldsymbol D^{(r-1)}\text{ except }(\boldsymbol D^{(r)})_{j_r}=D'_{j_r},\ r=1,\dots,m.
\]
Since $G$ is $K$-local, hence
\[
\begin{aligned}
\big|\mathbf 1_i(g)-\mathbf 1_i'(g)\big|
&\le \sum_{r=1}^m 
\Big|\mathbf 1\!\big\{G(i,\boldsymbol D^{(r)},\boldsymbol A)=g\big\}
-\mathbf 1\!\big\{G(i,\boldsymbol D^{(r-1)},\boldsymbol A)=g\big\}\Big| \\
&\le \sum_{r=1}^m |D_{j_r}-D'_{j_r}|
= \sum_{j\in J} |D_j-D'_j|.
\end{aligned}
\]
Taking conditional expectations and using Assumption \ref{assumption4},
\[
\mathbf E\!\left[\,\big|\,\mathbf 1_i(g)-\mathbf 1_i'(g)\,\big| \ \Big| \ \boldsymbol X,\boldsymbol A\right]
\ \le\ \sum_{j\in J}\mathbf E\!\left[\,|D_j-D'_j|\,\big|\,\boldsymbol X,\boldsymbol A\right]
\ \le\ n(i,K)\,\eta_n(s).
\]

\end{proof}

\begin{lemma}
Let \(B_i=\mathcal N(i,s)\), \(n(i,s)=|B_i|\). Define: 
\[
D'_{jt}=l_{jt}\!\big(\mathbf X^{\mathcal N(j,s)},\mathbf A^{\mathcal N(j,s)},\boldsymbol\nu_t^{\mathcal N(j,s)}\big),\quad
\mathbf D'^{\,B_i}_t=(D'_{jt})_{j\in B_i},
\]
and 
\[
Y'_{it}=h_{it}\!\big(\mathbf D'^{\,B_i}_t,\mathbf X^{B_i},\mathbf A^{B_i},\boldsymbol\varepsilon_t^{B_i}\big).
\]
Let \(\Delta Y_i=Y_{i2}-Y_{i1}\) and \(\Delta Y_i'=Y'_{i2}-Y'_{i1}\).
Under Assumptions \ref{assumption4}, and \ref{assumption9},
\[
\Big|\ \mathbb E[\Delta Y_i\mid \mathbf X,\mathbf A]\ -\ \mathbb E[\Delta Y_i'\mid \mathbf X,\mathbf A]\ \Big|
\ \le\ 2\gamma_n(s)+ \Lambda_n(i,s)\,n(i,s)\,\eta_n(s).
\]
\end{lemma}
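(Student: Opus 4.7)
The plan is to handle each period $t\in\{1,2\}$ separately via a two-step telescoping argument, inserting a single intermediate proxy between $Y_{it}$ and $Y'_{it}$ that coincides with $Y'_{it}$ in its domain of dependence but retains the \emph{true} treatments on that domain. Specifically, define
\[
Y_{it}^{\ast} \;=\; h_{it}\!\bigl(\mathbf D_t^{B_i},\,\mathbf X^{B_i},\,\mathbf A^{B_i},\,\boldsymbol\varepsilon_t^{B_i}\bigr),
\]
which differs from $Y_{it}$ only in that all inputs are restricted to $B_i=\mathcal N(i,s)$, and differs from $Y'_{it}$ only in that the treatment coordinates on $B_i$ are the observed $D_{jt}$ rather than the locally generated $D'_{jt}$. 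The triangle inequality then gives
\[
\bigl|\mathbb E[Y_{it}-Y'_{it}\mid\mathbf X,\mathbf A]\bigr|
\;\le\;
\mathbb E\!\bigl[\,|Y_{it}-Y_{it}^{\ast}|\bigm|\mathbf X,\mathbf A\bigr]
\;+\;
\mathbb E\!\bigl[\,|Y_{it}^{\ast}-Y'_{it}|\bigm|\mathbf X,\mathbf A\bigr].
\]

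The first term is controlled directly by the outcome half of Assumption \ref{assumption4} (ANI), which bounds it by $\gamma_n(s)$. For the second term, note that $Y_{it}^{\ast}$ and $Y'_{it}$ share identical covariate, network, and error inputs on $B_i$, so they differ only through the treatment vector on $B_i$. Assumption \ref{assumption9} (local Lipschitz continuity) then yields the pathwise bound
\[
|Y_{it}^{\ast}-Y'_{it}| \;\le\; \Lambda_n(i,s)\sum_{j\in B_i}|D_{jt}-D'_{jt}|.
\]
Taking conditional expectations and applying the propensity-score half of Assumption \ref{assumption4} coordinate-wise, together with $|B_i|=n(i,s)$, upgrades this to
\[
\mathbb E\!\bigl[\,|Y_{it}^{\ast}-Y'_{it}|\bigm|\mathbf X,\mathbf A\bigr]
\;\le\;\Lambda_n(i,s)\,n(i,s)\,\eta_n(s).
\]

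Combining both pieces yields $|\mathbb E[Y_{it}-Y'_{it}\mid\mathbf X,\mathbf A]|\le \gamma_n(s)+\Lambda_n(i,s)\,n(i,s)\,\eta_n(s)$ for each $t$. A final application of the triangle inequality to $\Delta Y_i-\Delta Y'_i=(Y_{i2}-Y'_{i2})-(Y_{i1}-Y'_{i1})$ doubles the $\gamma_n(s)$ term (since the pre-period contributes its own ANI slack) and absorbs the Lipschitz remainder into a single $\Lambda_n(i,s)\,n(i,s)\,\eta_n(s)$ contribution (taking $\Lambda_n(i,s)$ to be the larger of the two periodwise constants, as the statement implicitly does), delivering the claimed bound.

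The only subtlety — and the step that requires the careful choice of proxy above — is making sure the Lipschitz inequality in Assumption \ref{assumption9} is applied with both arguments already restricted to $\mathcal N(i,s)$, since the Lipschitz bound is stated only for treatment vectors at that truncation level. Routing the argument through $Y_{it}^{\ast}$ (rather than comparing $Y_{it}$ directly to $Y'_{it}$) is exactly what aligns the domains so that Assumption \ref{assumption9} applies; the remaining discrepancy between the full-network $Y_{it}$ and the truncated $Y_{it}^{\ast}$ is then absorbed cleanly by the outcome ANI coefficient $\gamma_n(s)$.
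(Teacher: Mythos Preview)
Your decomposition via the intermediate proxy $Y_{it}^{\ast}=h_{it}(\mathbf D_t^{B_i},\mathbf X^{B_i},\mathbf A^{B_i},\boldsymbol\varepsilon_t^{B_i})$ is exactly what the paper does, and your bounds on the two pieces (outcome ANI for $Y_{it}-Y_{it}^{\ast}$, then Lipschitz plus treatment ANI for $Y_{it}^{\ast}-Y'_{it}$) are correct.

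The gap is in your last paragraph. Applying the triangle inequality to $(Y_{i2}-Y'_{i2})-(Y_{i1}-Y'_{i1})$ with your per-period bound yields $2\gamma_n(s)+2\,\Lambda_n(i,s)\,n(i,s)\,\eta_n(s)$, not the claimed $2\gamma_n(s)+\Lambda_n(i,s)\,n(i,s)\,\eta_n(s)$. Your proposed fix---``taking $\Lambda_n(i,s)$ to be the larger of the two periodwise constants''---does not remove a factor of $2$: even with a common constant you are summing two copies of the same Lipschitz remainder. The paper avoids the doubling by additionally invoking Assumption~\ref{assumption2} (No Anticipation), which forces the pre-period outcome to be treatment-invariant, so that $h_{i1}(\mathbf D_1^{B_i},\cdot)=h_{i1}(\mathbf D_1'^{\,B_i},\cdot)$ and hence $Y_{i1}^{\ast}=Y'_{i1}$ almost surely. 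The period-$1$ Lipschitz term therefore vanishes, and only the period-$2$ contribution $\Lambda_n(i,s)\,n(i,s)\,\eta_n(s)$ survives. (The lemma statement lists only Assumptions~\ref{assumption4} and~\ref{assumption9}, but the paper's proof explicitly uses Assumption~\ref{assumption2} at this step; with Assumptions~\ref{assumption4} and~\ref{assumption9} alone your route gives the bound only up to the extra factor of $2$ on the second summand.)
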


\begin{proof}
    By Assumption \ref{assumption4} and the tower property,
    \[
\Big|\mathbb E[Y_{it}\mid \mathbf X,\mathbf A]
-\mathbb E\!\big[h_{it}(\mathbf D_t^{B_i},\mathbf X^{B_i},\mathbf A^{B_i},\boldsymbol\varepsilon_t^{B_i})\mid \mathbf X,\mathbf A\big]\Big|
\le \gamma_n(s).
\]
Subtracting \(t=1\) from \(t=2\) and applying the triangle inequality,
\[
\Big|\ \mathbb E[\Delta Y_i\mid \mathbf X,\mathbf A]
-\mathbb E\!\big[\Delta h_i(\mathbf D^{B_i}_t)\mid \mathbf X,\mathbf A\big]\ \Big|
\le 2\gamma_n(s),
\]
Then using Assumption \ref{assumption2}, \ref{assumption4} and \ref{assumption9},
\[
\begin{aligned}
\Big|\mathbb E[\Delta h_i(\mathbf D^{B_i}_t)-\Delta h_i(\mathbf D'^{\,B_i}_t)\mid \mathbf X,\mathbf A]\Big| 
&= \Big|\mathbb E[h_{i2}(\mathbf D_2^{B_i},\cdot)-h_{i2}(\mathbf D'^{\,B_i}_2,\cdot)\mid \mathbf X,\mathbf A]\Big|\\
&\le \mathbb E\!\left[\ \big|h_{i2}(\mathbf D_2^{B_i},\cdot)-h_{i2}(\mathbf D'^{\,B_i}_2,\cdot)\big|\ \Big|\ \mathbf X,\mathbf A\right]\\
&\le \Lambda_n(i,s)\sum_{j\in B_i}\mathbb E\!\left[\,|D_{j2}-D'_{j2}|\,\Big|\,\mathbf X,\mathbf A\right]\\
&\le \Lambda_n(i,s)\,n(i,s)\,\eta_n(s),
\end{aligned}
\]
Therefore,
\[
\Big|\ \mathbb E[\Delta Y_i\mid \mathbf X,\mathbf A]-\mathbb E[\Delta Y_i'\mid \mathbf X,\mathbf A]\ \Big|
\le 2\gamma_n(s)+\Lambda_n(i,s)\,n(i,s)\,\eta_n(s).
\]
\end{proof}
Now we are ready to prove Theorem 1. 
\begin{proof}
We start with the difference
$
\sqrt{m_n}\bigl(\hat{\tau}^{dr}(g) - \tau^{DATT}(g)\bigr).
$
The first step is to write this difference as a main sum plus a few remainder terms. Specifically,

\[
\sqrt{m_n} \bigl(\hat{\tau}^{dr}(g) - \tau^{DATT}(g)\bigr)
\;=\;
\underbrace{\frac{1}{\sqrt{m_n}} \sum_{i \in \mathcal{M}_n} \phi_{i}(g)}_{\text{main term}}
\;+\; R_{1} \;+\;  R_{2} \; ,
\]
where
\begin{align*}
R_{1} 
\; &=\;
\frac{1}{\sqrt{m_n}} 
\sum_{i \in \mathcal{M}_n} 
(1 - D_i)\, \mathbf{1}(G_i = g)\, \left(\Delta Y_i - \Delta u_{0g}(i, \mathbf{X}, \mathbf{A})\right)
\frac{\hat{p}_{1g}(i, \mathbf{X}, \mathbf{A}) - p_{1g}(i, \mathbf{X}, \mathbf{A})}
     {(1 - \hat{p}_{1g}(i, \mathbf{X}, \mathbf{A}))(1 - p_{1g}(i, \mathbf{X}, \mathbf{A}))}, \\[1.5ex]
R_{2} 
\; &=\;
\frac{1}{\sqrt{m_n}}
\sum_{i\in\mathcal M_n}
\frac{\hat p_{1g}(i,\mathbf X,\mathbf A)-D_i\mathbf{1}(G_i=g)}{1-\hat p_{1g}(i,\mathbf X,\mathbf A)}
(\Delta u_{0g}(i,\mathbf X,\mathbf A)-\Delta\hat u_{0g}(i,\mathbf X,\mathbf A)).
\end{align*}
The function \(\phi_{i}(g)\) captures the leading contribution of unit \(i\) to the difference between the estimand and true targets. To establish the asymptotic properties of the main term, we introduce the concept of \(\psi\)-dependence as defined in Kojevnikov (2021)\cite{kojevnikov2021limit} to characterize weak dependence. Let \(\mathcal{L}_{d, t}\) represent the set of all real-valued functions \( f(\cdot) \) defined on \(\mathbb{R}^{\nu \times h}\) that are bounded and Lipschitz continuous. $\operatorname{Lip}(f) \text { be the Lipschitz constant of } f \in \mathcal{L}_{d, t}$
Additionally, define the collection of subset pairs as:
\[
\mathcal{P}_M(h, h'; s) = \Big\{(H, H') \,\Big|\, H, H' \subseteq D_M, |H| = h, |H'| = h', \ell_\textbf{A}(H, H') \geq s \Big\}.
\]
This set \(\mathcal{P}_M(h, h'; s)\) consists of all pairs \((H, H')\) of subsets drawn from \(D_M\), where
- \( H \) and \( H' \) have sizes \( h \) and \( h' \), respectively. The minimum separation distance \(\rho(H, H')\) between the two subsets is at least \( s \), ensuring a certain level of weak dependence between them.

Given sigma-algebra \(\mathcal{F}_n\) generate by \((\mathbf{X}, \mathbf{A})\) , the collection \(\{\phi_{i}(g)\}_{i=1}^n\) is \(\psi\)-dependent. By assumptions on boundedness and dependence (Assumptions \ref{assumption7}(a), \ref{assumption7}(b)), one can apply central limit theorem for \(\psi\)-dependent sequences (Kojevnikov et al., 2021\cite{kojevnikov2021limit}, Theorem 3.2). We can obtain 
\[
\sigma_n^{-1} \;
\frac{1}{\sqrt{m_n}} 
\sum_{i \in \mathcal{M}_n} 
\phi_{i}(g)
\;\;\xrightarrow{d}\;\; 
\mathbf{N}(0,1),
\]
i.e., normalized by \(\sigma_n\), the main term converges in distribution to standard normal.  

Thus, it remains to show that the remainder terms \(R_{1}\) and \(R_{2}\) are negligible. We begin by writing \(\mu_{0g}(i)=\mu_{0g}(i,\boldsymbol{X},\boldsymbol{A})\), \({p}_{1g}(i)={p}_{1g}(i,\boldsymbol{X},\boldsymbol{A})\) and \(\hat{p}_{1g}(i)=\hat{p}_{1g}(i,\boldsymbol{X},\boldsymbol{A})\). Squaring and taking expectations yields
\begin{align*}
\mathbf{E}[(R_{1})^2] 
&= \frac{1}{m_n} \sum_{i,j\in\mathcal{M}_n} \mathbf{E} \left[ \mathbf{E} \Bigl[ (\Delta Y_{i} - \Delta\mu_{0g}(i))(\Delta Y_{j} - \Delta \mu_{0g}(j)) \mid \boldsymbol{D},\boldsymbol{X},\boldsymbol{A} \Bigr] \right. \\
&\quad \left. \times \frac{\mathbf{1}_i(g)\mathbf{1}_j(g)(\hat{p}_{1g}(i) - p_{1g}(i))(\hat{p}_{1g}(j) - p_{1g}(j))}{(1-\hat{p}_{1g}(i))(1- p_{1g}(i))(1- \hat{p}_{1g}(j)) (1-p_{1g}(j))} \right] \\
&\leq \frac{CC'}{m_n} \sum_{i,j\in\mathcal{M}_n} \gamma_n\Bigl(\frac{\ell_{\boldsymbol{A}}(i,j)}{2}\Bigr)^{1-2/p} \mathbf{E} \left[ |\hat{p}_{1g}(i) - p_{1g}(i)| |\hat{p}_{1g}(j) - p_{1g}(j)| \right] \\
&\quad \text{(using Assumption \ref{assumption6} and Lemma C.5 in \cite{leung2024graph})} \\
&= \frac{CC'}{m_n} \sum_{s=0}^{\infty} \gamma_n(s/2)^{1-2/p} \sum_{i,j\in\mathcal{M}_n} \mathbf{1} \{ \ell_{\boldsymbol{A}}(i,j) = s \} \\
&\quad \times \mathbf{E} \left[|\hat{p}_{1t}(i) - p_{1t}(i)| |\hat{p}_{1t}(j) - p_{1t}(j)| \right] \\
&\quad \text{(grouping pairs by network distance)} \\
&\leq \frac{CC'}{m_n} \sum_{s=0}^{\infty} \gamma_n(s/2)^{1-2/p} \left( \sum_{i,j} \mathbf{1} \{ \ell_{\boldsymbol{A}}(i,j) = s \} \right)^{1/2} \\
&\quad \times \left( \sum_{i,j} \mathbf{1} \{ \ell_{\boldsymbol{A}}(i,j) = s \} \mathbf{E} \left[ (\hat{p}_{1t}(i) - p_{1t}(i))^2 \right] \right)^{1/2} \\
&\leq CC' \sum_{s=0}^{\infty} \gamma_n(s/2)^{1-2/p} \frac{n}{m_n} \left( \frac{1}{n} \sum_{i=1}^n |\mathcal{N}^{\partial}(i,s)|^2 \right)^{1/2} \\
&\quad \times \left( \frac{1}{n} \sum_{i=1}^n \mathbf{E} \left[ (\hat{p}_{1t}(i) - p_{1t}(i))^2 \right] \right)^{1/2}.
\end{align*}

Under Assumptions \ref{assumption6} and \ref{assumption7}(b), the terms on the right-hand side converge to zero, implying that \(\mathbf{E}[R_{1}]=o_p(1)\) and thus \(R_{1}\) is negligible. Then, following the proof of Theorem 3.1 in Farrell (2021)\cite{farrell2021deep}, we obtain that \( R_{2}  = o_p(1) \).
Because each of the remainder terms \(R_{1}, R_{2}\) is shown to be negligible relative to the main term, they do not affect the limiting distribution. This establishes that
\[
\sqrt{m_n}\bigl(\hat{\tau}^{dr}(g) - \tau^{DATT}(g)\bigr)
\;\;\xrightarrow{d}\;\;
\mathbf{N}\bigl(0,\,\sigma^2\bigr),
\]
for some limit variance \(\sigma^2\).

\end{proof}

\subsection*{Proof of  Theorem 2:}

Define  
\[
\hat{\phi}_{i}(g) = \hat{\tau}^{dr}_i(g) - \hat{\tau}^{dr}(g),
\]  
and let
\[
\hat{\sigma}^2 = \frac{1}{m_n} \sum_{i \in \mathcal{M}_n}\sum_{j \in \mathcal{M}_n}\hat{\phi}_{i}(g)\hat{\phi}_{j}(g)\mathbf{1}\{\ell_{\boldsymbol{A}}(i,j)\leqslant b_n\},
\]
\[
{\sigma}^2 = \frac{1}{m_n} \sum_{i \in \mathcal{M}_n}\sum_{j \in \mathcal{M}_n}\phi_{i}(g)\phi_{j}(g)\mathbf{1}\{\ell_{\boldsymbol{A}}(i,j)\leqslant b_n\}.
\]

We first aim to show the convergence result:
\[
|\hat{\sigma}^2 - {\sigma}^2|\xrightarrow{p}0.
\]

Note that:
\[
\begin{aligned}
|\hat{\sigma}^2 - {\sigma}^2|
&= \left|\frac{1}{m_n}\sum_{i \in \mathcal{M}_n}\sum_{j \in \mathcal{M}_n}\left(\hat{\phi}_{i}(g)\hat{\phi}_{j}(g) - \phi_{i}(g)\phi_{j}(g)\right)\mathbf{1}\{l_{\boldsymbol{A}}(i,j)\leqslant b_n\}\right|\\[6pt]
&= \left|\frac{1}{m_n}\sum_{i \in \mathcal{M}_n}\left(\hat{\phi}_{i}(g)-\phi_{i}(g)\right)\sum_{j \in \mathcal{M}_n}\left(\hat{\phi}_{j}(g)+\phi_{j}(g)\right)\mathbf{1}\{l_{\boldsymbol{A}}(i,j)\leqslant b_n\}\right| \\[6pt]
&\leqslant \frac{n}{m_n}\left(\frac{1}{n}\sum_{i=1}^n\left(\hat{\phi}_{i}(g)-\phi_{i}(g)\right)^2\right)^{1/2}\left(\frac{1}{n}\sum_{i=1}^n\max_{j\in\mathcal{N}_n}\left(\hat{\phi}_{j}(g)+\phi_{j}(g)\right)^2n(i,b_n)^2\right)^{1/2}.
\end{aligned}
\]

Next, using Theorem \ref{theorem1}, we can show that
\[
\frac{1}{n} \sum_{i=1}^n\left(\hat{\phi}_{i}(g)-\phi_{i}(g)\right)^2=o_p\left(n^{-1 / 2}\right).
\]

And by Assumptions \ref{assumption5} and \ref{assumption8}(c), we have, for some universal constant \( C > 0 \), that

\[
\frac{1}{n}\sum_{i=1}^{n}\max_{j\in\mathcal{N}_n}\left(\hat{\phi}_{j}(g)+\phi_{j}(g)\right)^2 n(i,b_n)^2 
\leqslant 
C\frac{1}{n}\sum_{i=1}^{n} n(i,b_n)^2
=
O_p(\sqrt{n}).
\]

Then, we have $|\hat{\sigma}^2 - {\sigma}^2|$ is $o_p(1)$.

Next, following the proof strategy in Theorem 4 of Leung (2022)\cite{leung2022causal}, we can establish that

\[
{\sigma}^2 = \hat{\sigma}_*^2 + R_n + o_p(1).
\]

Specifically, this result follows by adapting Leung's (2022)\cite{leung2022causal} arguments, where we replace his term \( Z_i - \tau_i(t) \) with our \({\phi}_{i}^*(g)\), and utilize our Assumptions \ref{assumption8}(b)–(d) in place of his Assumptions 7(b)–(d).
Finally, by applying Proposition 4.1 from Kojevnikov et al. (2021)\cite{kojevnikov2021limit}, we can establish that \(\left|\hat{\sigma}_*^2 - \sigma_n^2\right| \xrightarrow{p} 0\).

\section*{Appendix B: Results for Spillover Effects}
Beyond estimating the direct average treatment effect on the treated, empirical researchers may also seek to evaluate the spillover average treatment effect on the treated, which is defined as

\begin{equation}
 \tau^{SATT}(g ; d)
 =
 \frac{1}{m_n} \sum_{i \in \mathcal{M}_n}\mathbb{E}
 \left[
 Y_{i2}(d, g)
 -
 Y_{i2}(d, 0)
 \mid
 D_i=1, G_i=g, \boldsymbol{X}, \boldsymbol{A}
\right].
\end{equation}

Identification is relatively straightforward for \(  \tau^{SATT}(g ; 1) \), since the potential outcomes \( Y_{i2}(1, g) \) for units who receive treatment under exposure level \( g \) are directly observed in the data. However, the corresponding counterfactual outcomes \( Y_{i2}(0, g) \) for these same individuals—i.e., what their outcomes would have been under control, given the same exposure—are not observed. To identify the direct effect of treatment assignment at each exposure level \( g \), we impose a parallel trends assumption on \( Y_{i2}(0, g) \), as follows:

\begin{assumption}[\textbf{Network Conditional Parallel Trends for \(  \tau^{SATT}(g ; 1) \)}]

\begin{align}
& \mathbb{E} \left[ Y_{i2}(0, g) \mid D_i = 1, G_i = g, \boldsymbol{X}, \boldsymbol{A} \right] - \mathbb{E} \left[ Y_{i1}(0, g) \mid D_i = 1, G_i = g, \boldsymbol{X}, \boldsymbol{A} \right]\notag\\
= \; & \mathbb{E} \left[ Y_{i2}(0, g) \mid D_i = 0, G_i = g, \boldsymbol{X}, \boldsymbol{A} \right] - \mathbb{E} \left[ Y_{i1}(0, g) \mid D_i = 0, G_i = g, \boldsymbol{X}, \boldsymbol{A} \right].
\end{align}

\end{assumption}
Consistent with the direct effects framework, the following expressions represent the doubly robust estimands for the spillover effects:

\[
\delta^{dr}(g,1) = \frac{1}{m_n} \sum_{i \in \mathcal{M}_n} \left[\left(D_i \mathbf{1}\{G_i = g\} - \frac{ D_i \mathbf{1}\{G_i = 0\} {p}_{1g}(i, \boldsymbol{X}, \boldsymbol{A})}{ {p}_{10}(i, \boldsymbol{X}, \boldsymbol{A})} \right) \left( \Delta Y_i - \Delta {u}_{10}(i, \boldsymbol{X}, \boldsymbol{A}) \right) \right].
\]

Following the same logic, a doubly robust estimator for $\delta^{dr}(g,0)$ can be constructed. The asymptotic distribution of the spillover ATT effect estimators can be established similarly, following almost the same approach as that for the direct ATT effect.

\section*{Appendix C: Results for Multiple Time Periods with Staggered Treatment}
When treatment timing is common across units, extending the framework to multiple time periods is straightforward. We simply aggregate all pre-treatment periods into one and all post-treatment periods into another, denoting them as \( t = 1 \) and \( t = 2 \), respectively. In contrast, when treatment is staggered across units, the situation becomes more complex. If we are interested in the ATT effect at a specific time \( t \), the conventional approach is to compare units that receive treatment at time \( t \) with those that never receive treatment, as in Callaway and Sant’Anna (2021)\cite{callaway2021difference}. The main limitation of this method is that units already treated before time \( t \) may affect the potential outcomes of those treated at time \( t \), thereby compromising the identification of the causal effect.

We consider a standard staggered DID setting with four groups and four periods. In each period, one additional group begins treatment, and once treated, a group remains treated. Only three groups receive treatment, so one group never receives treatment throughout. This structure is illustrated in Figure \ref{figure2}.

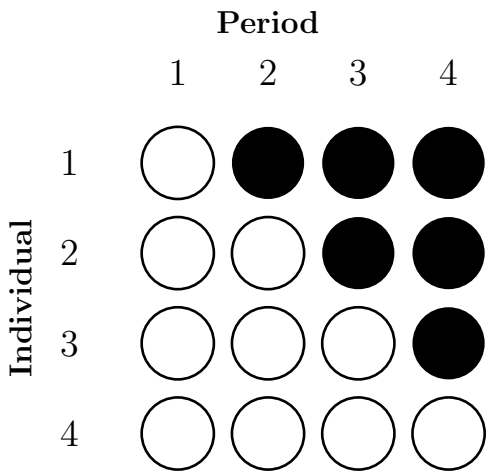
\begin{figure}
\centering
\caption{Common Staggered DID design}
\label{fig:staggered_did}

\begin{tikzpicture}[x=1.2cm,y=-1.2cm]
\node[font=\bfseries, anchor=north] at (2.5,-0.3) {Period};
\node[font=\bfseries, rotate=90, anchor=south] at (0,3) {Individual};

\foreach \col in {1,...,4}
  \node[font=\large] at (\col+0.5,0.5) {\col};

\foreach \row in {1,...,4}
  \node[font=\large] at (0.3,\row+0.5) {\row};

\foreach \row in {1,...,4}{%
  \foreach \col in {1,...,4}{%
    \ifthenelse{\row=1 \AND \col>1}
      {\fill (\col+0.5,\row+0.5) circle (0.4);}
    {\ifthenelse{\row=2 \AND \col>2}
      {\fill (\col+0.5,\row+0.5) circle (0.4);}
    {\ifthenelse{\row=3 \AND \col>3}
      {\fill (\col+0.5,\row+0.5) circle (0.4);}
      {\draw[line width=1pt] (\col+0.5,\row+0.5) circle (0.4);}
    }}}
}

\end{tikzpicture}
\label{figure2}
\end{figure}

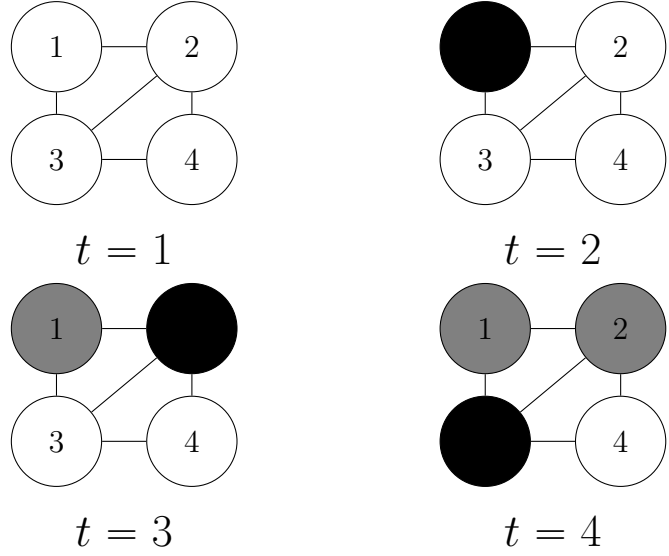
\begin{figure}
\centering
\caption{Network Staggered DID }
\label{fig:network_did}

\begin{tikzpicture}[scale=1.5, every node/.style={circle, draw, minimum size=1.2cm}]

\begin{scope}[xshift=0cm, yshift=2.5cm]
  \node (1) at (0,1) {1};
  \node (2) at (1.2,1) {2};
  \node (3) at (0,0) {3};
  \node (4) at (1.2,0) {4};
  \draw (1) -- (2); \draw (1) -- (3); \draw (2) -- (3);
  \draw (2) -- (4); \draw (3) -- (4);
  \node[draw=none] at (0.6,-0.8) {\Large $t=1$};
\end{scope}

\begin{scope}[xshift=3.8cm, yshift=2.5cm]
  \node[fill=black] (1) at (0,1) {1};
  \node (2) at (1.2,1) {2};
  \node (3) at (0,0) {3};
  \node (4) at (1.2,0) {4};
  \draw (1) -- (2); \draw (1) -- (3); \draw (2) -- (3);
  \draw (2) -- (4); \draw (3) -- (4);
  \node[draw=none] at (0.6,-0.8) {\Large $t=2$};
\end{scope}

\begin{scope}[xshift=0cm, yshift=0cm]
  \node[fill=gray] (1) at (0,1) {1};
  \node[fill=black] (2) at (1.2,1) {2};
  \node (3) at (0,0) {3};
  \node (4) at (1.2,0) {4};
  \draw (1) -- (2); \draw (1) -- (3); \draw (2) -- (3);
  \draw (2) -- (4); \draw (3) -- (4);
  \node[draw=none] at (0.6,-0.8) {\Large $t=3$};
\end{scope}

\begin{scope}[xshift=3.8cm, yshift=0cm]
  \node[fill=gray] (1) at (0,1) {1};
  \node[fill=gray] (2) at (1.2,1) {2};
  \node[fill=black] (3) at (0,0) {3};
  \node (4) at (1.2,0) {4};
  \draw (1) -- (2); \draw (1) -- (3); \draw (2) -- (3);
  \draw (2) -- (4); \draw (3) -- (4);
  \node[draw=none] at (0.6,-0.8) {\Large $t=4$};
\end{scope}

\end{tikzpicture}
\label{figure3}
\end{figure}

When consider a staggered DID design in a networked setting, where treatment propagates not only through direct assignment but also via neighboring exposure. Let there be four groups of individuals, each indexed by \( i = 1, \ldots, 4 \). These groups may be connected to one another through a known undirected network structure, represented by the adjacency matrix \( A \), as illustrated in Figure \ref{figure3}. In this setting, the black nodes indicate the group that receives treatment in the current period, gray nodes represent groups that have already been treated in previous periods, and white nodes correspond to groups that have not yet received any treatment. For illustration, We consider the exposure mapping $T_i$ as

\[
T_i = \left( D_i,\ \sum_{j=1}^n A_{ij} D_j \right).
\]

Under this design, the parallel trends  assumption boils down to comparing treated and untreated groups that share the same number of treated neighbors. At \(t = 2\), the newly treated Group 1 has zero treated neighbors; among the three still-untreated groups, only Group 4 likewise has no treated neighbors, making Groups 1 and 4 the valid comparison pair. The same logic carries over to \(t = 3\) and \(t = 4\): when Groups 2 and 3 receive treatment, Group 4 remains the only group with an identical count of treated neighbors, so it continues to serve as the appropriate control for the treated groups in those periods.

In the simple example above, we merely wanted to show that when spillover effects are present, a staggered DID design must isolate units that satisfy our conditional parallel trends  assumption to achieve causal identification. Refining the search for units that meet this assumption is the price one pays for pinning down more specific causal effects. For instance, identifying the direct average treatment effect on the treated.

\section*{Appendix D: Results for Repeated Cross-Sectional Data}
In a repeated cross-sectional design, we obtain two independent samples, one drawn before and one after the intervention.  The period-0 observations \((Y_0,D_0,X_0,A_0)\) follow a joint distribution \(P_0\), while the period-1 observations \((Y_1,D_1,X_1,A_1)\) follow a different distribution \(P_1\).  We allow arbitrary dependence among units within each period—for example, network-induced correlations—but assume the two periods are independent of one another.

For each unit \(i\), every exposure value \(g\in\mathcal G\) generated by the exposure mapping \(G_i\), treatment status \(d\in\{0,1\}\), and survey wave \(t\in\{0,1\}\), define the network-specific outcome regression: $ \mu_{dg}(i,t,\mathbf X,\mathbf A)
      =\mathbb{E}\!\left[
        Y_i \mid
        T_i=t,\;
        D_i=d,\;
        G_i=g,\;
        \mathbf X,\;
        \mathbf A
      \right]
$, and $
\Delta\mu_{dg}\!\bigl(i,\mathbf X,\mathbf A\bigr)
   = \mu_{dg}\!\bigl(i,1,\mathbf X,\mathbf A\bigr)
   - \mu_{dg}\!\bigl(i,0,\mathbf X,\mathbf A\bigr)
$, and $
\mu^{\,rc}_{dg,Y}\!\bigl(i,T,\mathbf X,\mathbf A\bigr)
   \;=\;
   T\,\mu_{dg}\!\bigl(i,1,\mathbf X,\mathbf A\bigr)
   +\bigl(1-T\bigr)\,\mu_{dg}\!\bigl(i,0,\mathbf X,\mathbf A\bigr).
$

These definitions parallel the Sant’Anna-and-Zhao notation while explicitly conditioning on the full covariate matrix \(\mathbf X\) and adjacency matrix \(\mathbf A\), thereby allowing the mean outcome to vary flexibly with each unit’s position in the network. Then for the case in which repeated cross-section data are available, we consider the estimator:
\[
{
\hat{\tau}^{drrc}{}(g)
   =\frac1{n}\sum_{i=1}^n
     \Bigl[\,\Delta w_{1,g}(D_i,T_i,G_i)
            -\Delta w_{0,g}\!\bigl(D_i,T_i,\mathbf X_i;\hat p_{1g}\bigr)\Bigr]
     \bigl[Y_i-\hat\mu^{rc}_{0,Y}\,(i,T_i,\mathbf X_i,\mathbf A)\bigr]},
\]
where the period-specific weights equal
\[
\begin{aligned}
w^{rc}_{1,t,g}(D,T,G)
      &={D\,\mathbf 1\{T=t\}\,\mathbf 1\{G=g\}}
            ,\\[6pt]
w^{rc}_{0,t,g}\!\bigl(D,T,G;\hat p_{1g}\bigr)
      &=\frac{\hat p_{1g}(X,\mathbf A)\,(1-D)\,\mathbf 1\{T=t\}\,\mathbf 1\{G=g\}}
             {1-\hat p_{1g}(X,\mathbf A)}.
\end{aligned}
\]
Calculating the difference between the first period and the pre-period we have
\[
\begin{aligned}
\Delta w^{rc}_{1,g}(D,T,G)&=w^{rc}_{1,1,g}(D,T,G)-w^{rc}_{1,0,g}(D,T,G),\\
\Delta w^{rc}_{0,g}\!\bigl(D,T,G;\hat p_{1g}\bigr)
      &=w^{rc}_{0,1,g}-w^{rc}_{0,0,g}.
\end{aligned}
\]
Under the regularity conditions introduced for the panel results, namely (i) the network-dependent law of large numbers and central-limit theorem of Kojevnikov et al. (2021)\cite{kojevnikov2021limit}; (ii) mean-square consistency of at least one first-step GNN learner; and (iii) the network-conditional parallel trends assumption, the repeated-cross-section estimator \(\widehat{\tau}^{rc}(g)\) is \(\sqrt n\)-consistent and asymptotically normal. The proof mirrors the argument used for the panel estimator, see Sant’Anna and Zhao (2020)\cite{sant2020doubly} for the same derivation in the i.i.d. repeated-cross-section case.

\section*{Appendix E: Supplementary Simulation Results}

The simulation in the main text primarily focuses on estimating $\tau^{DATT}$. Since our main objective is to demonstrate how applying GNNs to estimate the nuisance function can effectively mitigate bias introduced by confounder network interference, in this part we simulate different model configurations, specifically emphasizing the handling of treatment network interference. We intentionally exclude network confounding via covariates in order to isolate and better understand the effects of treatment interference alone. 

Specifically, our simulation model is defined as follows. We first generate the network adjacency matrix \(\boldsymbol{A}  \) using the same procedure as described in the main paper. The covariates \(X_1\) and \(X_2\) are independently drawn from a standard normal distribution. A nonlinear combination is then applied: $X = 1 + \frac{X_2}{1 + \exp(X_1)}$. All random error terms used throughout the simulation are independently drawn from \( \mathbf{N}(0, 1) \) as well.

We simulate a binary treatment indicator \( D_i \in \{0,1\} \) through a logistic model. Specifically, we define:
\[
V_i = \theta_{D,1} 
+ \theta_{D,2} \cdot X_i 
+  \nu_i,
\]
and define the probability of treatment as \( \pi_i = \frac{1}{1 + \exp(-V_i)} \). The treatment variable is then drawn as \( D_i \sim \text{Bernoulli}(\pi_i) \). The parameter vector is specified as: $
\theta_D = (0.4,\, 1.5).
$

Then we define outcomes for the pre-treatment and post-treatment periods as follows
\[
Y_{\text{pre},i} =
\theta_{\text{pre},1} 
+\theta_{\text{pre},2} \cdot D_i+ \theta_{\text{pre},3} \cdot D_i \cdot \mathbf{1} \left\{ \sum_{j=1}^n A_{ij} D_j > 0 \right\}
+ \theta_{\text{pre},4} \cdot X_i
+  \epsilon_i,
\]
with parameters: $
\theta_{y,\text{pre}} = (1,\, 0,\, 0,\, 0.6),
$
and
\[
Y_{\text{post},i} =
\theta_{\text{post},1} 
+ \theta_{\text{pre},2} \cdot D_i+\theta_{\text{post},3} \cdot D_i \cdot \mathbf{1} \left\{ \sum_{j=1}^n A_{ij} D_j > 0 \right\}
+ \theta_{\text{post},4} \cdot X_i
+  \mu_i,
\]
with parameters $
\theta_{y,\text{post}} = (0.5,\, 0.2,\,0.2,\, 0.8).
$
In this setup, we use \(\mathbf{1}\left\{ \sum_{j=1}^n A_{ij} D_j > 0 \right\}\) as the exposure mapping. Under the above model specifications, the true exposure-specific DATTs are \(\tau^{\text{DATT}}(0) = 0.2\) and \(\tau^{\text{DATT}}(1) = 0.4\).

\begin{table}[htbp]
\centering
\caption{Simulation Results for treatment spillover}
\resizebox{\textwidth}{!}{
\begin{tabular}{cccccccccc}
\toprule

$n$ &  & 500 &  &  & 1000 &  &  & 2000 &  \\
\# treated         &  & 196 &  &  & 391 &  &  & 792 &  \\
\midrule

$\hat{\tau}^{DATT}(1)$ &  & 0.39974 &  &  & 0.39978 &  &  & 0.39981 &  \\
  & & (0.07094) & &  & (0.04971) &  &  & (0.03490) &  \\
$\hat{\tau}^{DATT}(0)$ & & 0.20314 &  &  & 0.20155 &  &  & 0.20084 &  \\
 &  & (0.06466) &  &  & (0.04582) &  &  & (0.03246) &  \\

\midrule

DR-DID &  & 0.26358 &  &  & 0.26321 &  &  & 0.26423 &  \\ 
            &  & (0.05113) &  &  & (0.03609)  & &  & (0.02551) &  \\

\bottomrule
\end{tabular}}
\label{table1}
\end{table}
We estimate \(\tau^{\text{DATT}}(1)\) and \(\tau^{\text{DATT}}(0)\) with the proposed DATT estimators which explicitly account for treatment spillover effects. We compare with the DR-DID estimator of Sant’Anna and Zhao (2020)\cite{sant2020doubly} which ignores heterogeneity in spillover exposure. Table 3 shows the estimation results, with standard errors in parenthesis. As can be seen from Table 3, the Network DR-DID method delivers accurate DATT estimates under various sample sizes. The traditional DR-DID estimates lie between \(\tau^{\text{DATT}}(1)\) and \(\tau^{\text{DATT}}(0)\). The traditional DR-DID estimator conflates treatment effects across different exposure groups. As a result, its estimate essentially averages across treated units in both exposure groups, making it difficult to interpret the true causal effects when spillovers are present. 

\end{document}